\newcommand{\qqed}{}
\newcommand{\BZ}{\text{{\sc Bz}}\xspace}
\newcommand{\EXPN}{{\sc Expansion}\xspace}
\newcommand{\Rae}{{\sc Rae}\xspace}
\newcommand{\RaeM}{\text{{\sc Rae}}\xspace}
\newcommand{\SAT}{{\sc Satisfaction}\xspace}
\newcommand{\SatM}{\text{{\sc Sat}}\xspace}
\newcommand{\VC}{$\#\frac{2}{3}${\sc -VC}\xspace}
\newcommand{\cL}{{\cal L}}
\newcommand{\cM}{{\cal M}}
\newcommand{\cP}{{\cal P}}
\newcommand{\cW}{{\cal W}}
\newcommand{\tF}{{\tt F}}
\newcommand{\tI}{{\tt I}}
\newcommand{\tL}{{\tt L}}
\newcommand{\tR}{{\tt R}}
\newcommand{\tFI}{{\tt FI}}
\newcommand{\tRF}{{\tt RF}}
\newcommand{\tLR}{{\tt LR}}
\newcommand{\OLF}{{OLF}\xspace}
\newcommand{\oOLF}{{odd-OLF}\xspace}
\newcommand{\gOLF}{{gOLF}\xspace}
\newcommand{\remove}[1]{}
\newtheorem{definition}{Definition}
\newtheorem{example}{Example}
\newtheorem{lemma}{Lemma}
\newtheorem{theorem}{Theorem}
\newtheorem{corollary}{Corollary}
\begin{document}

%

\title{Measuring satisfaction in societies\\ with opinion leaders and mediators}

\author{Xavier Molinero\\ Department of Mathematics\\
Technical University of Catalonia, Manresa, Spain. \\  
\texttt{xavier.molinero@upc.edu}\\
\ \\
Fabi\'an Riquelme \\CITIAPS\\ University of Santiago, Santiago, Chile.\\ 
\texttt{fabian.riquelme.c@usach.cl} \\
\ \\ Maria Serna\\Computer Science Department.\\ 
Technical University of Catalonia, Barcelona, Spain.\\
\texttt{mjserna@cs.upc.edu}}
\date{}

\maketitle

\thispagestyle{empty}
\begin{abstract}
An opinion leader-follower model (\OLF) is a two-action collective deci\-sion-making model for societies, in which three kinds of actors are considered: {\em opinion leaders}, {\em followers}, and {\em independent actors}. In \OLF the initial decision of the followers can be modified by the influence of the leaders. Once the final decision is set, a collective decision is taken applying the simple majority rule~\cite{BRS11}. 
We consider a generalization of \OLF, the \gOLF models which allow collective decision taken by rules different from the single majority rule.   Inspired in this model we define two new families of collective decision-making models associated with influence games~\cite{MRS15}. We define the \emph{oblivious} and \emph{non-oblivious influence} models. We show that \gOLF models are non-oblivious influence models played on a two layered bipartite influence graph. 

Together with  \OLF models,  the {\em satisfaction} measure was introduced and studied.  We analyze the computational complexity of the satisfaction measure for \gOLF models and the other collective decision-making models introduced in the paper. 
We show that computing the satisfaction measure is \#P-hard in all the considered models. 
On the other hand, we provide two subfamilies of decision models in which the satisfaction measure can be computed in polynomial time.   

Exploiting the relationship with  influence games, we can relate  the satisfaction measure with  the Rae index of an associated simple game. The Rae index is closely related to the Banzhaf. Thus, our results also extend the families of simple games for which computing the Rae index and the Banzhaf value is  computationally hard.

\smallskip
\noindent
\textbf{Keywords}
Collective decision-making model. Opinion leader-follower model. Simple game. Influence game. Satisfaction measure. Banzhaf value. Computational complexity.
\end{abstract}


\section{Introduction}

Opinion leadership is a well known and established model for communication in sociology and marketing. It comes from the {\em two-step flow of communication} theory proposed in the 1940s~\cite{LBG68}. This theory recognizes the existence of collective decision-making situations in societies formed by actors called {\em opinion leaders}, who exert influence over other kind of actors called the {\em followers}, resulting in a two-step decision process~\cite{LBG68,KL55}. In the first step of the process, all actors receive information from the environment, generating their own decisions; in the second step, a flow of influence from some actors over others is able to change the choices of some of them~\cite{Tro66}.
Following those ideas an {\em opinion leader-follower model} (\OLF)  was introduced in~\cite{BRS11}. The subjacent influence structure is based on a  society with opinion leaders, followers and {\em independent actors}. This latter kind of actors neither can influence nor can be influenced by other members of the society. The collective decision-making model includes a procedure to reach an individual final decision, from a given initial decision of the participants. Finally, a  global decision is taken by applying  the single majority rule to the final decisions of the actors which restricts the society to have an odd number of participants. 

In this paper we are interested in extending \OLF models in order to incorporate more complex scenarios in the decision-making model. In general, simple games are used to formulate the situations in which the actors or players have to decide about one alternative. Simple games were firstly introduced in 1944 by von Neumann and Morgenstern~\cite{vNM44} 
and constitute the classic way for such models. Briefly speaking, a simple game is determined by its {\em winning coalitions}, i.e., those sets of actors that can force a ``yes'' decision. 
Nevertheless, the way in which the influence is exerted among leaders and followers in an \OLF suggests that another formalism based on graphs might provide the necessary insights, so  we explore the possibility of defining  decision-making models based on  {\em influence games} ~\cite{MRS15}. Influence games are  based on social networks that incorporate a procedure of influence spreading. The winning coalitions are determined depending on the level of influence that the coalitions can exert in a social network.  Influence games are defined through a more complex graph structure among the actors than the structures of the \OLF. Here the society is modeled by an \emph{influence graph} and it is assumed that influence spreads according to the linear threshold model~\cite{Gra78,Sch78,KKT03}. Influence games are general enough to capture the complete class of simple games, since every simple game can be represented by an influence game~\cite{MRS15}. However, the size of the graph representing a simple game might be exponential in the number of players. 

We first define an extension of the family of \OLF, the \emph{generalized} \OLF (\gOLF),  by allowing, in the final decision mechanisms, a rule different from the simple majority rule.  We  introduce two new kinds of collective decision-making models based on the influence spread mechanism of influence games. In the {\em oblivious influence models}, the initial decision of the followers is not taken into account and a negative initial decision of the actors is assumed. In the {\em non-oblivious influence models}, the initial decision of the followers is taken into consideration in a similar way as for \OLF. We also consider a subfamily of \gOLF, the \oOLF. This subfamily plays an important role as it lies in  the intersection of oblivious and non-oblivious models. The inclusion relationship among the introduced models is explained later, but for now it is depicted in Figure~\ref{fig:submodels}. 

\begin{figure}[t]
\centering
\scalebox{0.8}{
\begin{tikzpicture}
\begin{scope}[fill opacity=0.5]
\draw[rounded corners=3ex] (-1,-.5) rectangle (250pt,190pt);
\draw[fill=gray!10, draw=black] (5.5,2.6) circle (2.6);
\draw[fill=gray!20, draw=black] (2.5,2.6) circle (2.6);
\draw[fill=gray!30, draw=black] (2.7,2.6) circle (1.8);
\draw[fill=gray!50, draw=black] (3.7,2.6) circle (.7);
\draw[fill=gray!50, draw=black] (3,2) circle (.5);
\node at (2,5.8)   (S) {\textbf{collective decision-making models}};
\node at (6.6,2.8) (D) {\textbf{oblivious influence}};
\node at (2.5,4.6) (R) {\textbf{\small{non-oblivious influence}}};
\node at (2.1,2.8) (W) {\textbf{\footnotesize{\gOLF}}};
\node at (3.7,2.6) (W) {\textbf{\footnotesize{\oOLF}}};
\node at (3,2) (W) {\textbf{\footnotesize{\OLF}}};
\end{scope}
\end{tikzpicture}
}\caption{Inclusion relationship between subfamilies of collective decision-making models.\label{fig:submodels}}
\end{figure}
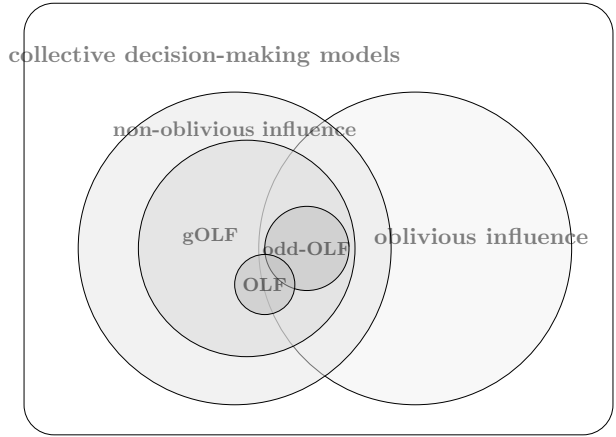

Motivated by the theoretical study of the effects that collective decision-making can have on the participants, a {\em satisfaction} measure 
was defined for an \OLF in van den Brink et al.~\cite{BRS11}. The satisfaction of an actor refers to the number of initial decisions for which the collective decision coincides with the initial decision of the actor. 
In this paper we are interested in, besides the definition of the decision models,  the computational complexity of the satisfaction measure.   
We show that the satisfaction problem is \#P-hard for \oOLF. It is interesting to note that the influence graph corresponding to those models is  formed by a simple two layered bipartite graph. Our hardness proof comes from a reduction from the \VC problem, i.e., the problem of computing the number of vertex covers with exactly $\frac{2}{3}n$ vertices. This problem is equivalent to the problem of counting independent sets of size $\frac{1}{3}n$ which was shown to be \#P-hard, by a reduction from the well known \#P-complete $\#3${\sc-SAT} problem, the counting version of the 3-satisfiability problem~\cite{Hof10}. 

We are also interested in analyzing what are the influence graphs for which the satisfaction measure can be computed in polynomial time. 
As the satisfaction problem is hard for two layered bipartite graphs, there is not much room for finding tractable subfamilies. We explore the possibility of tractable cases in multilayered bipartite graphs where the connection among consecutive layers is the disjoint union of complete bipartite graphs. Those structures are able to represent a ``more-than-two-step flow of communication''. In multilayered bipartite graphs we have an additional set of actors called the {\em mediators}. Those actors  can be influenced by the opinion leaders and may influence the followers. Thus we extend \OLF models by allowing an intermediate set of actors to play the role of mediators between leaders and followers. The first family, the {\em strong hierarchical influence graphs}, is formed by layered bipartite graphs in which influence is exerted in an all-to-all fashion following a hierarchical structure. In this case the mediators are interposed between the leaders and the followers. In the second family, the {\em star influence graphs}, we have only one mediator, but in this case we allow a two way interaction between the mediator and some opinion leaders, modeling a natural mediation schema occurring in society.
We show  for those two subfamilies of bipartite influence graphs that  the satisfaction measure can be computed in polynomial time, for both the oblivious and non-oblivious associated decision-making models. 

Interestingly enough, every collective decision-making model considered in this paper is monotonic and thus can be reinterpreted as a simple game. Under this interpretation, we are able to show  that the satisfaction measure coincides with the {\em Rae index}. The Rae index was introduced by Douglas W. Rae~\cite{Rae69} for anonymous games and afterwards it was applied by Dubey and Shapley~\cite{DS79} to simple games. In the context of simple games, Dubey and Shapley~\cite{DS79} established an affine-linear relation between the Rae index and the Banzhaf value~\cite{LMF06}, establishing a computational complexity equivalence among their computation. Thus our results on the complexity of the satisfaction problem, both positive and negative, apply to the computation of the Rae index and the Banzhaf value. Computing the Banzhaf value is polynomial time solvable for simple games represented by the set of winning coalitions, but is \#P-complete for simple games represented by the set of minimal\footnote{A minimal winning coalition is a winning coalition such that by removing any player we obtain a losing coalition. A maximal losing coalition is a losing coalition such that by adding any player we obtain a winning coalition.}  winning coalitions~\cite{Azi08}. The problem is also known to be \#P-hard for weighted voting games~\cite{DP94} and influence games~\cite{MRS15}. For the case of influence games, the known \#P-hardness result leaves open the question on whether the problem is easy or hard when the influence graph is restricted to be bipartite. Thus, our results extend the subfamilies of simple games for which the complexity of the computation of the Banzhaf value is known.

The paper is organized as follows. Section~\ref{sec:prelim} presents the definitions of \OLF,  the satisfaction measure, simple and influence games. In Section~\ref{sec:oblivious} we introduce the {\em oblivious} and {\em non-oblivious influence models} and show  some of their basic properties. In Section~\ref{sec:SAT} we prove the hardness of the satisfaction problem for \oOLF, which implies hardness for the other three considered models. Sections~\ref{sec:SATpol} and ~\ref{sec:Star-Inf} are devoted to the definition of strong hierarchical influence graphs and star influence graphs, respectively, and to the design of algorithms to solve the satisfaction problem in polynomial time for oblivious and non-oblivious influence models on those graph classes. We finalize this paper stating some conclusions and open problems.

\section{Preliminary concepts}
\label{sec:prelim}

In this section we introduce the necessary  definitions and concepts, such as \OLF, the satisfaction measure, simple and influence games. Before stating the definitions, we fix some notation for graphs and sets.

All the graphs considered in this paper are directed, unless otherwise stated, without loops and multiple edges. We use standard notation for graphs: $G=(V,E)$ is a directed graph, $V(G)$ denotes the vertex set, $E(G)$ is the edge set, and $n$ denotes the number of vertices $|V|$. We use simply $V$ and $E$ when there is no risk of confusion.
For $i\in V$, $S_G(i)=\{j\in V\mid (i,j)\in E\}$ denotes the set of {\em successors} of $i$, and $P_G(i)=\{j\in V\mid (j,i)\in E\}$ the set of {\em predecessors} of $i$. We extend this notation to vertex subsets, so for $X\subseteq V$, $S_G(X)=\{i\in V\mid \exists j\in X, i\in S_G(j)\}$ and $P_G(X)=\{i\in V\mid \exists j\in X, i\in P_G(j)\}$ denote the set of successors and predecessors of all elements that belong to $X$, respectively. Finally, let  $\delta^-(i)=|P_G(i)|$ and $\delta^+(i)=|S_G(i)|$ denote the in-degree and the out-degree of the vertex $i$, respectively.
A {\em two layered bipartite graph} is a bipartite graph $G=(V_1,V_2,E)$ with $V(G)=V_1\cup V_2$ and $E\subseteq V_1\times V_2$, i.e., so that for $i\in V_1$, $P_G(i)=\emptyset$ and,
for $i\in V_2$, $S_G(i)=\emptyset$.  By $I_a$, as usual, we denote a graph that is formed by $a$ isolated vertices.
Given $G=(V,E)$ and $X\subseteq V$, $G[X]$ denotes the subgraph induced by $X$ and $G\setminus X$ denotes the subgraph induced by $V\setminus{}X$, i.e., $G\setminus{}X = G[V\setminus X]$.

As usual, given a finite set $N$, $\cP(N)$ denotes its power set.
A family of subsets $\cW\subseteq\cP(N)$ is said to be {\em monotonic} when, for $X\in\cW$, if $X\subseteq Z$, then $Z\in\cW$.
In some cases we need to establish a relation between vectors $x=(x_1,\ldots,x_n)\in\{0,1\}^n$ and sets $X\subseteq\{1,\dots,n\}$. For doing so, we use the notation $X(x)=\{0\le{}i\le{}n\mid x_i=1\}$, and $x(X)=(x_1,\dots, x_n)$ with $x_i=1$ if and only if $i\in X$, and $x_i=0$ otherwise.

Every collective decision-making model $\cM$ considered in this paper is  (or can be) defined on  a  weighted digraph on which a  {\em collective decision function} is defined. 
The  actors (vertices) initially  can choose among  two alternatives, $1$ or $0$.  Given a set of $n$ actors, the initial decision is represented by an {\em initial decision vector} $x\in\{0,1\}^n$. The initial decisions of the actors, due to the interactions among the participants, and according to the model, can change giving rise to a {\em final decision vector} of the actors. We usually denote this final decision vector as $c^{\cM}(x)\in\{0,1\}^n$.  Finally,  the collective decision function $C_\cM:\{0,1\}^n\to\{0,1\}$, that depends on $c^\cM(x)$, assigns a final decision to the system. In the following subsections we formally define those vectors and functions for each of the considered decision-making models.  In general, we  drop the explicit reference to $\cM$ when the model is clear from the context.

Note that the collective decision function is usually defined through a decision process on a graph.
It may include many parameters, therefore its computational complexity might be  {\em high}. Nevertheless,
the models considered in this paper have their collective decision functions computable in polynomial time.

\subsection{The opinion leader-follower  model and the satisfaction measure}

The \OLF  was introduced by van den Brink et al.~\cite{BRS11}. As we mention before, the model  considers three kind of actors: {\em opinion leaders}, {\em followers} and {\em independent actors}.  We provide a formal definition of our generalization of \OLF. 

\begin{definition}\label{def1_2}
A {\em generalized opinion leader-follower model (\gOLF)} is a triple $\cM=(G,r,q)$ where $G=(V,E)$ is a two layered bipartite digraph that represents the actors' relations.   The {\em fraction value} $r$, $1/2\leq r\leq 1$, is a rational number. The {\em quota} $q$, $0<q\leq n$, is a natural number.  The collective decision function $C$ is defined as follows.
Let $x\in\{0,1\}^n$ be an initial decision vector, then the final decision vector $c=c^\cM(x)$ has the following components, for $1\leq i\leq n$:
$$c_i=\begin{cases}
   1 & \text{if  $|\{j\in P_G(i)\mid x_j=1\}|\geq \lceil r\cdot|P_G(i)|\rceil$}\\
& \qquad\text{ and  $|\{j\in P_G(i)\mid x_j=0\}|  <  \lceil r\cdot|P_G(i)|\rceil$}\\
   0 & \text{if $|\{j\in P_G(i)\mid x_j=0\}|\geq \lceil r\cdot|P_G(i)|\rceil$}\\
&\qquad \text{  and
                $|\{j\in P_G(i)\mid x_j=1\}|  <  \lceil r\cdot|P_G(i)|\rceil$}\\
 x_i & \text{otherwise.}\\
\end{cases}
$$
Finally,  the collective decision function $C_\cM:\{0,1\}^n \to \{0,1\}$ is defined as
$$C_\cM(x) =
\begin{cases}
 1 &  \mbox{if } |\{i\in V\mid c_i(x)=1\}| \geq q\\
 0 &  \mbox{otherwise}.\\
\end{cases}
$$
\end{definition}

Note that the values $c^\cM_i$, for $i\in V$,  and  $C_\cM$ are well defined. 
In the definition of \OLF, $n$ was required to be odd and $q$ was fixed  to $(n+1)/2$~\cite{BRS11}.

Observe that  the set $V$ can be partitioned into three subsets:  
 the opinion leaders $\tL(G)=\{i\in V\mid P_G(i)=\emptyset\ \text{and}\ S_G(i)\neq\emptyset\}$, the followers $\tF(G)=\{i\in V\mid S_G(i)=\emptyset\ \text{and}\ P_G(i)\neq\emptyset\}$ and the independent actors $\tI(G)=V\setminus(\tL(G)\cup\tF(G))=\{i\in V\mid S_G(i)=\emptyset\ \text{and}\ P_G(i)=\emptyset\}$. We sometimes will  need the set $\tFI(G)=\tF(G)\cup \tI(G)$. 
Note that if $(i,j)\in E$ then $i\in\tL(G)$ and $j\in\tF(G)$.
When there is no risk of ambiguity, we simply use $S(i)$, $P(i)$, $\tI$, $\tL$ or $\tF$ omitting the corresponding graph $G$.
Finally, note also that $S(\tL)=\tF$, $P(\tF)=\tL$ and $S(\tI)=P(\tI)=\emptyset$.

Observe that both opinion leaders and independent actors always keep their own inclinations in the final decision vector, but a follower may take a final decision different from  their own initial inclination. In particular, for followers with an even number of predecessors,  when $r=1/2$ every tie arising in the predecessors' decision is broken by the follower's initial decision.
Also, the  restriction $1/2\leq r\leq 1$ could be replaced by  $0\leq r\leq 1$, without changing the model, as a follower having $r<1/2$ can be replaced by an independent actor. 
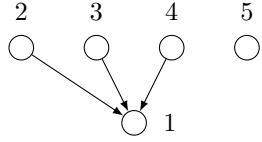
\begin{figure}[t]
\centering
\begin{tikzpicture}[every node/.style={circle,scale=0.9}, >=latex]
\node[draw](a) at (1.5,0.0)[label=right:$1$] {};
\node[draw](b) at (0.0,1.0)[label=above:$2$] {};
\node[draw](c) at (1.0,1.0)[label=above:$3$] {};
\node[draw](d) at (2.0,1.0)[label=above:$4$] {};
\node[draw](e) at (3.0,1.0)[label=above:$5$] {};
\draw[->] (b) to node {}(a);
\draw[->] (c) to node {}(a);
\draw[->] (d) to node {}(a);
\end{tikzpicture}
\caption{A two layered bipartite graph.\label{fig1}}
\end{figure}

\begin{example}\label{ex1}
Figure \ref{fig1} illustrates a two layered bipartite graph $G=(V,E)$.
Let $\cM=(G,1/2,3)$ be a \gOLF.
For the initial decision vectors $x=(0,1,1,0,0)$ and $y=(1,1,1,0,0)$, we get the same final decision vector, $c^\cM(x)=c^\cM(y)=(1,1,1,0,0)$,
and the same collective decision value, $C_\cM(x)=C_\cM(y)=1$.
\end{example}

Now we define the satisfaction measure. This measure was defined on the set of actors of the original \OLF to compare and contrast different structures of the model~\cite{BRS11}. However, the concept applies to any collective decision-making model.

\begin{definition}
Let $\cM$ be a collective decision-making model over a set of $n$ actors. For an initial decision vector $x\in\{0,1\}^n$, an actor $i$ is {\em satisfied} when $C_\cM(x)=x_i$. The {\em satisfaction} measure of the actor $i$ corresponds to the number of initial decision vectors for which the actor is satisfied, i.e.,
$$\SatM_\cM(i) = |\{x\in\{0,1\}^n\mid C(x)=x_i\}|.$$
\end{definition}

Associated to the satisfaction measure, we consider the following computational problem.
\begin{description}
\item{\SAT}
\item{Instance:} \tabto{10ex} A collective decision-making model $\cM$ and an actor $i$.
\item{Output:}   \tabto{10ex} $\SatM_\cM(i)$.
\end{description}

\subsection{Simple and influence games}
\label{ssec:sg+ig}

In this subsection we define simple and influence games. In the scenario of simple games, we follow notation from~\cite{TZ99} and for influence games from~\cite{MRS15}.

\begin{definition}\label{def:SG}
A {\em simple game} is a tuple $\Gamma=(N,\cW)$, where $N$ is a finite set of {\em players} and $\cW\subseteq\cP(N)$ is a monotonic family of subsets of $N$.
\end{definition}

The subsets of $N$ are called {\em coalitions}, the set $N$ is the {\em grand coalition} and each $X\in\cW$ is a {\em winning coalition}.
The complement of the family of winning coalitions is the family of {\em losing coalitions} $\cL$, i.e., $\cL=\cP(N)\setminus\cW$.
Any of those set families determine uniquely the game $\Gamma$ and constitute one of the usual forms of representation for simple games~\cite{TZ99}, although the size of the representation is not, in general, polynomial in the number of players (see also~\cite{MRS15b}).

In simple game theory, power indices are used to measure the relevance that a player has in the game. We recall here the definition of two classic power indices, the Banzhaf value~\cite{Pen46,Ban65} and the Rae index~\cite{Rae69}.

\begin{definition}
Let $\Gamma=(N,\cW)$ be a simple game. The {\em Banzhaf value} of player $i\in N$ is defined as
$$\BZ_\Gamma(i)=|\{X\in\cW\mid X\setminus\{i\}\notin\cW\}|.$$
The {\em Rae index} of player $i\in N$ is
$$\RaeM_\Gamma(i) = |\{X\in\cW\mid i\in X\}|+|\{X\notin\cW\mid i\notin X\}|.$$
\end{definition}

Dubey and Shapley~\cite{DS79} (see also~\cite{LMF06}) established an affine-linear relation between the Rae index and the Banzhaf value.
Let $\Gamma=(N,\cW)$ be a simple game and $i\in N$,

\begin{equation}\label{eq:Sat_Rae_Bz}
\RaeM_\Gamma(i) = 2^{n-1} + \BZ_\Gamma(i).
\end{equation}

Note that, for $i\in N$, as $\BZ_\Gamma(i)\geq 0$, $\RaeM_\Gamma(i)\geq 2^{n-1}$.

Influence games constitute a form of representation of simple games based on graphs. Influence games are based on the linear threshold model for influence spreading~\cite{Gra78,Sch78,KKT03}
and take into account the spread of influence phenomenon through social networks. In this setting a coalition wins if it can induce enough participants  to accept the proposal. Before defining influence games we introduce influence graphs and the activation process by which influence spreads in the network.

\begin{definition}
An {\em influence graph} is a tuple $(G,f)$, where $G=(V,E)$ is a directed graph and $f$ is a labeling function assigning to any vertex a non-negative rational value.
\end{definition}

Let $(G,f)$ be an influence graph and let $X\subseteq V$. The {\em activation process}, with initial activation $X$, at time $t$, $0\leq t \leq n$, activates a set of vertices $F^t(X)$ defined as follows
\begin{align*}
F^0(X) &=X\\
F^{t}(X)&= F^{t-1}(X)\cup \{i\in V \mid |P_G(i)\cap F^{t-1}(X)| \geq f(i)\}, \text{ for } 1\leq t\leq n.
\end{align*}
The {\em spread of influence} of $X$ in $(G,f)$ is the set $F(X)=F^n(X)$.

Observe that, in the activation process, a new vertex is activated whenever the number of vertices activated in the previous step pointing to it is greater or equal to its label. Thus, the labeling function quantifies the number of neighbors that each actor requires to be influenced. By definition, the activation process is monotonic; therefore, as there are at most $n$ vertices, there may be a value $k<n$ so that $F^k(X)=F^n(X)$.
Note also that given an initial activation set $X\subseteq V$, $F(X)$ can be computed in polynomial time.

\begin{example}\label{ex2}
Figure~\ref{fig2} shows the spread of influence in an influence graph starting from the initial activation set $X=\{2,3\}$.
After the first step we have $F^1(X)=\{1,2,3\}$, and after the second step, the last one, we have $F^2(X)=F(X)=\{1,2,3,5\}$.
\begin{figure}[t]
\centering
\begin{tikzpicture}[every node/.style={circle,scale=0.8}, >=latex]
\node[draw](a)         at (1.5,0.0)[label=right:$1$] {2};
\node[fill=blue!20](b) at (0.0,1.0)[label=above:$2$] {1};
\node[fill=blue!20](c) at (1.0,1.0)[label=above:$3$] {1};
\node[draw](d)         at (2.0,1.0)[label=above:$4$] {1};
\node[draw](e)         at (3.0,1.0)[label=above:$5$] {1};
\draw[->] (b) to node {}(a);
\draw[->] (c) to node {}(a);
\draw[->] (d) to node {}(a);
\draw[->] (a) to node {}(e);
\end{tikzpicture}
\qquad
\begin{tikzpicture}[every node/.style={circle,scale=0.8}, >=latex]
\node[fill=blue!20](a) at (1.5,0.0)[label=right:$1$] {2};
\node[fill=blue!20](b) at (0.0,1.0)[label=above:$2$] {1};
\node[fill=blue!20](c) at (1.0,1.0)[label=above:$3$] {1};
\node[draw](d)         at (2.0,1.0)[label=above:$4$] {1};
\node[draw](e)         at (3.0,1.0)[label=above:$5$] {1};
\draw[->] (b) to node {}(a);
\draw[->] (c) to node {}(a);
\draw[->] (d) to node {}(a);
\draw[->] (a) to node {}(e);
\end{tikzpicture}
\qquad
\begin{tikzpicture}[every node/.style={circle,scale=0.8}, >=latex]
\node[fill=blue!20](a) at (1.5,0.0)[label=right:$1$] {2};
\node[fill=blue!20](b) at (0.0,1.0)[label=above:$2$] {1};
\node[fill=blue!20](c) at (1.0,1.0)[label=above:$3$] {1};
\node[draw](d)         at (2.0,1.0)[label=above:$4$] {1};
\node[fill=blue!20](e) at (3.0,1.0)[label=above:$5$] {1};
\draw[->] (b) to node {}(a);
\draw[->] (c) to node {}(a);
\draw[->] (d) to node {}(a);
\draw[->] (a) to node {}(e);
\end{tikzpicture}
\caption{Spread of influence (the set of colored vertices)  starting from the initial activation $X=\{2,3\}$.\label{fig2}}
\end{figure}
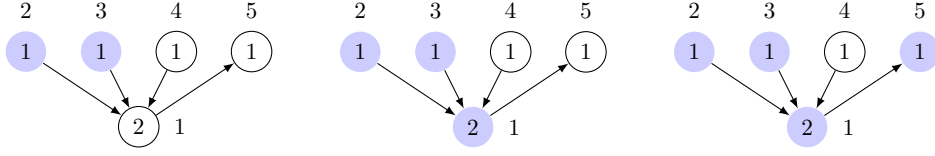
\end{example}

Influence games are simple games defined with the help of an influence graph where a coalition wins if it is able to influence a required number of the network participants.

\begin{definition}
\label{def:IG}
An {\em influence game} is a tuple $\Gamma=(G,f,q,N)$, where $(G,f)$ is an influence graph and $q$ is an integer, the {\em quota}, with  $0\leq q\leq n$.
The set of players is the set $N\subseteq V(G)$, and $X\subseteq V(G)$ is a winning coalition if and only if $|F(X\cap N)|\geq q$, otherwise $X$ is a losing coalition.
\end{definition}

Note that, in an influence game, the set of players $N$ can be a proper subset of the set of vertices in the graph $V$. It is easy to see that the set of winning coalitions of an influence game is monotonic, since for $X\subseteq V$ and $i\in V$, $|F(X\cap N)|\geq q$ entails $|F((X\cap N)\cup\{i\})|\geq q$. Therefore influence games are a subfamily of simple games. Moreover, any simple game admits a representation as influence game~\cite{MRS15}.

\section{Oblivious and non-oblivious influence models}
\label{sec:oblivious}

In this section we introduce two new collective decision-making models and  prove some basic properties relating the different models.
Following the definition of \gOLF, it seems natural to consider a collective decision-making model which uses  an influence graph to exert influence before the final decision is taken.  In our models the actors are the set of vertices $V$ of an influence graph, and the collective decision function takes into consideration the spread of influence process running on it. In such a setting a player represents either some sort of opinion leader or an independent actor, and a non-player a follower.  The main difference is that now players could be convinced of a change of opinion and take a final decision different from the initial one. This behaviour was not possible in the \gOLF model.  Based on this idea, we define two collective decision-making models.

When modeling collective decision models we require that $f(i)>0$, for $i\in V$.  In other words,  some positive level of influence to adopt an opinion is required. Nevertheless, for technical reasons, we keep open the possibility of having $f(i)=0$ for some actor $i$ in an influence graph. 

\begin{definition}
An {\em oblivious influence model} is a collective decision-making model $\cM=(G,f,q,N)$, where $(G,f,q,N)$ is an influence graph with positive labeling function and whose collective decision function is defined, for $x\in\{0,1\}^n$, as
$$C_\cM(x) = \begin{cases}
 1 & \mbox{if } |F(X(x)\cap N)|\geq q\\
 0 & \mbox{if } |F(X(x)\cap N)|<q.
\end{cases}
$$
\end{definition}

\begin{definition}
A {\em non-oblivious influence model} is a collective decision-making model $\cM=(G,f,q,N)$ with positive labeling function  whose collective decision function is defined as follows. For $x\in\{0,1\}^n$, $p_i(x)=|F(X(x)\cap N)\cap P(i)|$ and $q_i(x)=|P(i)\setminus F(X(x)\cap N)|$. For $i\in V(G)\setminus N$, we define the final decision vector $c=c^\cM(x)$ as
$$c_i = \begin{cases}
 1  & \mbox{if } p_i(x) \geq f(i) \text{ and } q_i(x) < f(i)\\
 0  & \mbox{if } q_i(x) \geq f(i) \text{ and } p_i(x) < f(i)\\
x_i & \mbox{otherwise}\\
\end{cases}
$$
and, for $i\in N$,
$$c_i = \begin{cases}
 1 & \mbox{if } i\in F(X(x))\\
 0 & \mbox{otherwise.}\\
\end{cases}
$$
Finally, the collective decision function, for $x\in\{0,1\}^n$, is defined as
$$C_\cM(x) = \begin{cases}
 1 & \mbox{if } |\{i\in V(G)\mid c_i=1\}|\geq q\\
 0 & \mbox{otherwise}.\\
\end{cases}
$$
\end{definition}

In order to avoid confusion, for an influence game $\Gamma=(G,f,q,N)$, we use $\cM^o(\Gamma)$ to denote the corresponding oblivious influence model and $\cM^n(\Gamma)$ to denote the corresponding non-oblivious influence model.

Observe that, in an influence game $\Gamma$  with $N=V$,   
for $\mathcal{M}^n(\Gamma)$, we have $|\{i\in V\mid c_i=1\}|=|\{i\in V\mid i\in F(X(x))\}|$, therefore  $C_{\cM^o}= C_{\cM^n}$.  
In oblivious influence models, as in influence games, the initial decision of the actors in $V\setminus{}N$ is not taken into account and a negative initial decision is assumed. In non-oblivious influence models, as in \OLF, the initial decision of actors in $V\setminus{}N$ is taken into account under some considerations.

\begin{example}
In Figure~\ref{fig:evenpredecessors-ig} we provide an influence game $(G,f,3,\{2,3,4,5\})$ in which the models $\cM^o(\Gamma)$ and $\cM^n(\Gamma)$ do not coincide. Observe that, for $x=(0,1,1,0,0)$, $C_{\cM^o(\Gamma)}(x)=1$ but $C_{\cM^n(\Gamma)}(x)=0$.
\end{example}

\begin{figure}[t]
\centering
\begin{tikzpicture}[every node/.style={circle,scale=0.9}, >=latex]
\node[draw](a) at (1.5,0.0)[label=right:$1$] {2};
\node[draw](b) at (0.0,1.0)[label=above:$2$] {1};
\node[draw](c) at (1.0,1.0)[label=above:$3$] {1};
\node[draw](d) at (2.0,1.0)[label=above:$4$] {1};
\node[draw](e) at (3.0,1.0)[label=above:$5$] {1};
\draw[->] (b) to node {}(a);
\draw[->] (c) to node {}(a);
\draw[->] (d) to node {}(a);
\draw[->] (e) to node {}(a);
\end{tikzpicture}
\caption{An example of an influence graph.\label{fig:evenpredecessors-ig}}
\end{figure}
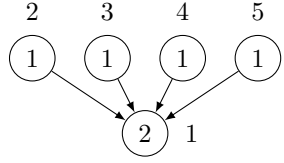

Now we analyze some basic properties of both oblivious and non-oblivious influence models. We start relating some particular decision-making models with simple games.

We say that a decision-making model $\cM$  is  \emph{monotonic}, on a set of actors $V$, if its collective decision function is monotonic with respect to inclusion, on $\cP(V)$. For a monotonic model, we define an associated  simple game $\Gamma_\cM$ in the natural way. A coalition $X$ is winning in $\Gamma_\cM$ if and only if $C_\cM(x(X))=1$. Our first results relates the \SAT measure and the \Rae index.

\begin{lemma}
\label{lem:Rae=Sat}
Let $\cM$ be a monotonic decision-making model   on  a set of actors $V$. For $i\in V$, $\SatM_\cM(i)=\RaeM_{\Gamma_\cM}(i)$.
\end{lemma}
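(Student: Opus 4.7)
The proof is essentially an unpacking of definitions through the canonical bijection between $\{0,1\}^n$ and $\cP(V)$ given by $x \mapsto X(x)$ (with inverse $X \mapsto x(X)$). The plan is to split the set counted by $\SatM_\cM(i)$ according to the value of $x_i$, translate each piece into a set of coalitions via this bijection, and observe that the resulting two sets are exactly the ones that define $\RaeM_{\Gamma_\cM}(i)$.

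First I would note that, since $\cM$ is monotonic, the family $\cW = \{X \subseteq V \mid C_\cM(x(X)) = 1\}$ is monotonic, so $\Gamma_\cM = (V,\cW)$ is a well-defined simple game in the sense of Definition~\ref{def:SG}. By construction, $X \in \cW$ iff $C_\cM(x(X)) = 1$, and equivalently, for $x \in \{0,1\}^n$, $C_\cM(x) = 1$ iff $X(x) \in \cW$.

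Next I would write
\[
\SatM_\cM(i) = |\{x \in \{0,1\}^n \mid C_\cM(x) = x_i\}|
 = A_1 + A_0,
\]
where
\[
A_1 = |\{x \mid x_i = 1,\ C_\cM(x) = 1\}|, \qquad A_0 = |\{x \mid x_i = 0,\ C_\cM(x) = 0\}|.
\]
Under the bijection $x \leftrightarrow X(x)$, the condition $x_i = 1$ corresponds to $i \in X(x)$, the condition $x_i = 0$ to $i \notin X(x)$, and the conditions $C_\cM(x) = 1$ and $C_\cM(x) = 0$ correspond respectively to $X(x) \in \cW$ and $X(x) \notin \cW$. Hence
\[
A_1 = |\{X \in \cW \mid i \in X\}|, \qquad A_0 = |\{X \notin \cW \mid i \notin X\}|,
\]
and adding the two equalities yields $\SatM_\cM(i) = \RaeM_{\Gamma_\cM}(i)$.

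There is no real obstacle: monotonicity of $\cM$ is used only to guarantee that $\Gamma_\cM$ is a legitimate simple game so that $\RaeM_{\Gamma_\cM}(i)$ is defined; the counting identity itself is independent of monotonicity and follows purely from the fact that ``$i$ is satisfied'' splits into the two cases ``both $1$'' and ``both $0$'', which are exactly the two terms in the definition of the Rae index.
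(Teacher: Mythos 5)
Your proof is correct and follows essentially the same route as the paper's: both arguments reduce to the observation that, under the bijection $x \leftrightarrow X(x)$, the satisfied decision vectors are exactly those with $X(x)\in\cW$ and $i\in X(x)$ or $X(x)\notin\cW$ and $i\notin X(x)$, which are the two terms of the Rae index. The paper phrases this as a four-case analysis while you split directly into the two satisfying cases, but the content is identical.
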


\begin{proof}
Recall that, in  $\Gamma_\cM$, we have $\cW=\{X\subseteq V\mid C_\cM(x(X))=1\}$. Furthermore, for $i\in V$,
$\RaeM(i)=|\{X\in\cW \mid i\in X\}|+|\{X\notin\cW \mid i\notin X\}|$.
For $i\in{}V$ and $X\subseteq V$, we consider four cases.
\begin{itemize}
  \item If $X\in\cW$ and $ i\in X$, then $1=C_\cM(x(X))=x_i$.
  \item  If $X\in\cW$ and $ i\notin X$, then $1=C_\cM(x(X))\neq x_i=0$.
  \item  If $X\notin\cW$ and $ i\in X$, then $0=C_\cM(x(X))\neq x_i=1$.
  \item  If $X\notin\cW$ and $ i\notin X$, then $0=C_\cM(x(X))=x_i$.
\end{itemize}
\qqed\end{proof}

Next we prove that both the oblivious and non-oblivious models associated to an influence game are monotonic. 
\begin{lemma}
Let $\Gamma=(G,f,q,N)$ be an influence game. $\cM^o(\Gamma)$ and $\cM^n(\Gamma)$ are monotonic. 
\end{lemma}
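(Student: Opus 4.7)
\smallskip
\noindent\textbf{Proof plan.} The plan is to reduce both claims to the monotonicity of the spread-of-influence operator, namely to the fact that $A\subseteq B\subseteq V(G)$ implies $F(A)\subseteq F(B)$. I would establish this as a preliminary observation by induction on the step $t$ of the activation process: the base case $F^0(A)=A\subseteq B=F^0(B)$ is immediate, and for the inductive step any vertex $i$ with $|P_G(i)\cap F^{t-1}(A)|\ge f(i)$ also satisfies $|P_G(i)\cap F^{t-1}(B)|\ge f(i)$, since $F^{t-1}(A)\subseteq F^{t-1}(B)$ by hypothesis.

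For the oblivious model $\cM^o(\Gamma)$, fix $x,y\in\{0,1\}^n$ with $x\le y$ componentwise, so that $X(x)\subseteq X(y)$ and hence $X(x)\cap N\subseteq X(y)\cap N$. Applying the monotonicity of $F$, we obtain $|F(X(x)\cap N)|\le |F(X(y)\cap N)|$, which immediately gives $C_{\cM^o(\Gamma)}(x)\le C_{\cM^o(\Gamma)}(y)$, as required.

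For the non-oblivious model $\cM^n(\Gamma)$, I would prove the stronger pointwise statement $c_i(x)\le c_i(y)$ for every $i\in V(G)$ whenever $x\le y$; monotonicity of $C_{\cM^n(\Gamma)}$ then follows because the count $|\{i\mid c_i=1\}|$ is nondecreasing. For $i\in N$, the monotonicity of $F$ applied to $X(x)\subseteq X(y)$ gives $F(X(x))\subseteq F(X(y))$, so $i\in F(X(x))$ implies $i\in F(X(y))$. For $i\in V(G)\setminus N$, note that $p_i(x)\le p_i(y)$ and $q_i(x)\ge q_i(y)$, again by monotonicity of $F$ on $X(x)\cap N\subseteq X(y)\cap N$. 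The only nontrivial situation to rule out is $c_i(x)=1=x_i$ (the ``otherwise'' branch for $x$) together with $c_i(y)=0$ forced by $q_i(y)\ge f(i)$ and $p_i(y)<f(i)$. This is the step I expect to be the main obstacle. But if it occurred, then $q_i(x)\ge q_i(y)\ge f(i)$ and $p_i(x)\le p_i(y)<f(i)$, which by the definition would force $c_i(x)=0$, contradicting $c_i(x)=1$. Hence this case is impossible and pointwise monotonicity holds.

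Combining the two pieces, a componentwise increase of $x$ cannot decrease the final decision vector coordinate by coordinate, and therefore cannot decrease the size of $\{i\mid c_i=1\}$ nor the size of $F(X(x)\cap N)$. Thus $C_{\cM^o(\Gamma)}$ and $C_{\cM^n(\Gamma)}$ are both monotonic with respect to inclusion on $\cP(V(G))$, proving the lemma.
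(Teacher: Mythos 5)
Your proof is correct and follows essentially the same route as the paper's: both arguments reduce everything to the monotonicity of the spread-of-influence operator $F$ and then split on whether a vertex is a player or not, with your version merely spelling out the induction for $F$ and the tie-breaking case analysis that the paper leaves implicit. The only cosmetic difference is that you compare arbitrary $x\le y$ coordinatewise while the paper adds one actor at a time; this changes nothing of substance.
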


\begin{proof}
{\em Oblivious model}. Let $\cM=\cM^o(\Gamma)$. For $X\subseteq{}X'\subseteq{}V$, $(X\cap{}N)\subseteq{}(X'\cap{}N)\subseteq{}N$. Thus, by the monotonicity of the spread of influence process  we know that $F(X)\subseteq F(X')$. Thus,
we have $C_\cM(X)\leq{}C_\cM(X')$.

{\em Non-oblivious model}. Let $\cM=\cM^n(\Gamma)$. For $X\subseteq{}V$ and $i\not\in{}X$, we consider two cases.
\begin{itemize}
  \item If $i\in{}N$, then $C_\cM(X)\leq{}C_\cM(X\cup{}\{i\})$ because of the monotonicity of $F$.
  \item If $i\not\in{}N$, then $(F(X\cap N)\cap P(i))\subseteq{}(F((X\cup\{i\})\cap N)\cap P(i))$. Thus, by the definition of the collective decision function,
we have $C_\cM(X)\leq{}C_\cM(X\cup{}\{i\})$.
\end{itemize}
\qqed\end{proof}

In order to relate \gOLF  with influence games we consider the following construction. Given a \gOLF $\cM=(G,r,q)$ we associate with $\cM$ the influence game $\Gamma(\cM)=(G,f,q,N)$ where $N=\tL\cup\tI$ and the labeling function $f$ is defined as
\[
f(i)=
\begin{cases}
 \Big\lceil r\delta^-(i)\Big\rceil & \mbox{if } i\in\tF\\
 1                                 & \mbox{if } i\in\tL\cup\tI.
\end{cases}
\]
Note that $N$ does not include the set of followers because followers never can enforce their personal conviction and their final decision depends exclusively on whether the opinion leaders can influence them or not. 

We denote the influence graph  $(G,f)$ of the associated influence game $\Gamma(\cM)$ as $G(\cM)$  and by $N(\cM)$ the corresponding set of players.

\begin{lemma}
\label{lem:OLF_as_IG}
Let $\cM$ be a \gOLF model and let $\cM\rq{}=\cM^n(\Gamma(\cM))$.
Then  the collective decision functions of  $\cM$ and $\cM\rq{}$ coincide.
\end{lemma}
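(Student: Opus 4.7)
The plan is to prove the two collective decision functions coincide by showing, for every $x\in\{0,1\}^n$, that the final decision vectors produced by $\cM$ and $\cM'$ agree coordinate by coordinate; once this is established, the conclusion follows because both collective decisions apply the same threshold $q$ to the number of coordinates equal to $1$.

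First I would exploit the two-layered bipartite structure of $G$ to control the spread of influence in $\Gamma(\cM)$. Since edges of $G$ go only from $\tL$ to $\tF$ and $N(\cM)=\tL\cup\tI$ carries the label $1$, the activation process starting from $X(x)\cap N$ stabilizes after a single step: leaders and independents have no predecessors, so no new vertex of $N$ can be added; the only additions are followers $i\in\tF$ for which $|P(i)\cap X(x)|\ge \lceil r|P(i)|\rceil$. A parallel argument shows that $F(X(x))$ and $F(X(x)\cap N)$ coincide on $N$, because enlarging the initial set with followers cannot activate any leader (followers have no successors).

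Next I would verify the coordinates case by case. For $i\in N=\tL\cup\tI$: in $\cM$ the set $P(i)$ is empty, both conditions of Definition~\ref{def1_2} fail, and $c^\cM_i=x_i$; in $\cM'$ the defining condition $i\in F(X(x))$ reduces, by the observation above, to $i\in X(x)$, i.e., to $x_i=1$, so $c^{\cM'}_i=x_i$ as well. For a follower $i\in\tF$, I would substitute $p_i(x)=|P(i)\cap X(x)|=|\{j\in P(i)\mid x_j=1\}|$ and $q_i(x)=|P(i)|-p_i(x)=|\{j\in P(i)\mid x_j=0\}|$, which follow from $P(i)\subseteq\tL\subseteq N$ together with the fact that the leaders in $F(X(x)\cap N)$ are exactly those in $X(x)$. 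With the label $f(i)=\lceil r|P(i)|\rceil$ the three-branch rule defining $c^{\cM'}_i$ then becomes verbatim the \gOLF rule defining $c^\cM_i$.

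I do not expect a genuine obstacle; the only point requiring care is bookkeeping the two different initial sets used in the non-oblivious model ($F(X(x))$ for vertices in $N$ and $F(X(x)\cap N)$ for vertices outside $N$), and confirming via the two-layered bipartite structure that they induce the same activation on $N$. Once the final vectors agree componentwise, the common thresholding $|\{i\mid c_i=1\}|\ge q$ produces the same value of $C$ in both models, yielding the claim.
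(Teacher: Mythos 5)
Your proposal is correct and follows essentially the same route as the paper's proof: check coordinatewise that leaders/independents keep their initial decision in both models (since they have no predecessors), identify $p_i(x)$ and $q_i(x)$ for a follower with the counts of its predecessors voting $1$ and $0$, and observe that the label $\lceil r\,\delta^-(i)\rceil$ makes the three-branch rules identical. Your extra care in checking that $F(X(x))$ and $F(X(x)\cap N)$ agree on $N$ addresses a point the paper glosses over, but it is the same argument.
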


\begin{proof}
Let $\cM\rq{}=\cM^n(\Gamma(\cM))$ be the non-oblivious influence model associated with $\cM=(G,r,q)$.
Let $X\subseteq{}V$ be a coalition so that $x(X)$ is the initial decision of the actors. Set $c=c^\cM(X)$ and $c'=c^{\cM\rq{}}(X)$ to be the corresponding final decision vectors.
Observe that, for $i\in\tL\cup\tI$, actor $i$ can not be influenced by any other actor in $V$; therefore, $i\in F(X\cap{}N)$ if and only if $i\in X$ and $c_i = c'_i$.
For $i\in{}\tF$, $\{j\in{}P(i)\mid x_j=1\} = F(X\cap{}N)\cap{}P(i)$. Therefore, $c_i = c'_i$ because the tie-breaking rule is the same in both models.
Thus, we have that $C_\cM = C_{\cM\rq{}}$.
\qqed\end{proof}

As a consequence of the previous result we have a way to map \gOLF models to a subfamily of the non-oblivious influence models. In general, a \gOLF cannot be cast as an oblivious influence model because the tie-breaking rules are different. Nevertheless, we can consider a subfamily in which ties do not arise.

\begin{definition}
An {\em \oOLF} is a \gOLF  $\cM=(G,r,q)$ in which $r=1/2$ and for all $i\in\tF$, $\delta^-(i)$ is odd.
\end{definition}

\begin{lemma}
\label{lem:oOLF}
Let $\cM$ be an \oOLF model and let $\cM\rq{}=\cM^o(\Gamma(\cM))$.
Then  the collective decision functions of  $\cM$ and $\cM\rq{}$ coincide.
\end{lemma}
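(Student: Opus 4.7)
The plan is to show, for every initial decision vector $x\in\{0,1\}^n$, the identity
\[
|\{i\in V : c_i^\cM(x)=1\}| = |F(X(x)\cap N)|,
\]
from which $C_\cM(x)=C_{\cM'}(x)$ follows because both collective decision functions are obtained by comparing this quantity with the same quota $q$. I would argue directly from the two definitions rather than invoking Lemma~\ref{lem:OLF_as_IG}, since the structural observations involved are the same either way.

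Set $Y=X(x)\cap N$. Since $G$ is two-layered bipartite and $N=\tL\cup\tI$, every vertex in $N$ has in-degree zero in $G$, so no element of $N$ is ever added to $Y$ by the activation process; conversely, every follower has out-degree zero, so the process stabilises after a single round. This gives the explicit description
\[
F(Y)=Y\cup\{i\in\tF : |P(i)\cap Y|\geq f(i)\}.
\]
Moreover $P(i)\subseteq\tL\subseteq N$ for every follower $i$, so $|P(i)\cap Y|=|\{j\in P(i):x_j=1\}|$. Finally, from the construction of $\Gamma(\cM)$, the label on a follower is $f(i)=\lceil r\delta^-(i)\rceil=(\delta^-(i)+1)/2$, using $r=1/2$ together with the odd-OLF hypothesis that $\delta^-(i)$ is odd.

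The crux is to show that for an odd-OLF follower the third (``otherwise'') branch of the gOLF definition never applies. Writing $p=|\{j\in P(i):x_j=1\}|$ and $\bar p=|\{j\in P(i):x_j=0\}|$, we have $p+\bar p=\delta^-(i)=2f(i)-1$. If both $p\geq f(i)$ and $\bar p\geq f(i)$, then $p+\bar p\geq 2f(i)>\delta^-(i)$; if both $p<f(i)$ and $\bar p<f(i)$, then $p+\bar p\leq 2f(i)-2<\delta^-(i)$. Thus exactly one of the two threshold conditions is met, whence $c_i^\cM(x)=1$ iff $p\geq f(i)$ iff $i\in F(Y)$.

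For $i\in N$ the predecessor set is empty, so neither of the first two cases of the gOLF definition applies and $c_i^\cM(x)=x_i$; on the oblivious side the earlier observation gives $i\in F(Y)$ iff $i\in Y$ iff $x_i=1$. Summing the contributions of $N$ and $\tF$ yields the desired identity, and hence the equality of the two collective decision functions. The main obstacle is the arithmetic elimination of the tie case for followers; every other step is a routine unwinding of definitions, made easy by the fact that the two-layered bipartite structure forces the spread of influence to be a one-step process with no feedback into $N$.
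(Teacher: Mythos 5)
Your proof is correct and follows essentially the same approach as the paper's: both hinge on the observation that the oddness of $\delta^-(i)$ together with $r=1/2$ rules out the tie (``otherwise'') case for followers, so that the set of actors with final decision $1$ coincides exactly with $F(X(x)\cap N)$. Your version is merely more self-contained, carrying out explicitly the one-round description of the activation process and the arithmetic $p+\bar p=2f(i)-1$ that the paper compresses into the remark that $|F(X\cap N)\cap P(i)|\neq |P(i)|-|F(X\cap N)\cap P(i)|$.
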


\begin{proof}
Let $\cM=(G,1/2,q)$ be an \oOLF, let $\Gamma(\cM)$ be the influence game associated with $\cM$, and
let $X\subseteq{}V$ be a coalition so that $x(X)$ is the initial decision of the actors.
Let $\cM'=\cM^o(\Gamma(\cM))$. 
For $i\in\tL\cup\tI$, Lemma~\ref{lem:OLF_as_IG} shows that $c^\cM_i = c^{\cM'}_i$.
For $i\in{}\tF$, $\{j\in{}P(i)\mid x_j=1\} = F(X\cap{}N)\cap{}P(i)$.
Since $r=1/2$ and $\delta^-(i)$ is odd, $|F(X\cap{}N)\cap{}P(i)|\neq|P(i)|-|F(X\cap{}N)\cap{}P(i)|$.
Thus, the final decision vector of the collective decision function does not depend on the follower's initial decision.
Therefore, there are no ties arising in the predecessors' decision. So, the oblivious influence model verifies $C_{\cM'}=C_\cM$.
\qqed\end{proof}

Note that Lemma~\ref{lem:oOLF} is not true for a \gOLF where some follower has even in-degree. A counter example is the \OLF  $\cM=(G,1/2,3)$, whose graph $G$ and associated influence graph are depicted in Figure~\ref{fig:evenpredecessors}.

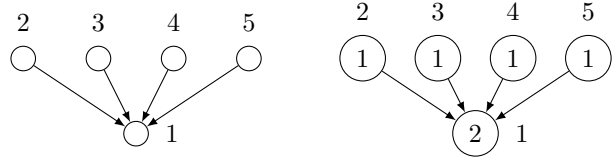
\begin{figure}[t]
\centering
\begin{tikzpicture}[every node/.style={circle,scale=0.9}, >=latex]
\node[draw](a) at (1.5,0.0)[label=right:$1$] {};
\node[draw](b) at (0.0,1.0)[label=above:$2$] {};
\node[draw](c) at (1.0,1.0)[label=above:$3$] {};
\node[draw](d) at (2.0,1.0)[label=above:$4$] {};
\node[draw](e) at (3.0,1.0)[label=above:$5$] {};
\draw[->] (b) to node {}(a);
\draw[->] (c) to node {}(a);
\draw[->] (d) to node {}(a);
\draw[->] (e) to node {}(a);
\end{tikzpicture}
\qquad
\begin{tikzpicture}[every node/.style={circle,scale=0.9}, >=latex]
\node[draw](a) at (1.5,0.0)[label=right:$1$] {2};
\node[draw](b) at (0.0,1.0)[label=above:$2$] {1};
\node[draw](c) at (1.0,1.0)[label=above:$3$] {1};
\node[draw](d) at (2.0,1.0)[label=above:$4$] {1};
\node[draw](e) at (3.0,1.0)[label=above:$5$] {1};
\draw[->] (b) to node {}(a);
\draw[->] (c) to node {}(a);
\draw[->] (d) to node {}(a);
\draw[->] (e) to node {}(a);
\end{tikzpicture}
\caption{A two layered bipartite graph $G$ and the influence graph $(G,f)$ corresponding to the \gOLF $(G,1/2,3)$.\label{fig:evenpredecessors}}
\end{figure}

\section{The hardness of computing the satisfaction measure}
\label{sec:SAT}

In this section we show that the \SAT problem is \#P-hard for \oOLF. In order
to do so we introduce some notation and define an auxiliary problem.

Let $(G,f)$ be an influence graph. For $i\in V(G)$, $\tF_i(G,f)$  denotes the set $\{j\in S_G(i)\mid|P_G(j)|=1\text{ and }f(j)=1\}$.
For $N\subseteq V(G)$ and $1\leq k\leq n$, $F_k(N,G,f)$ denotes the set $\{X\subseteq V(G)\mid |F(X\cap N)|=k\}$.
When there is no risk of ambiguity, we simply say $\tF_i$ or $F_k(N)$.
Note that $F_k(V)=\{X\subseteq V\mid |F(X)|=k\}$. Now we are able to define the auxiliary problem.
\begin{quote}
\begin{description}
\item{\EXPN}
\item{Instance:} 
An influence graph $(G,f)$, a set of vertices $N\subseteq V(G)$ and an integer $k$.
\item{Output:}   
$|F_k(N, G,f)|$.
\end{description}
\end{quote}
We sometimes consider that an instance of the \EXPN problem is an \oOLF $\cM=(G,r,q)$ by taking $G=G(\cM)$, $N=N(\cM)$ and some adequate value for $k$.

The following results show the relationship among the \SAT and the \EXPN problems for some oblivious influence models.

\begin{lemma}
\label{lem:SAT_for_oblivious}
Let $\Gamma = (G,f,q,N)$ be an influence game and let $\cM^o(\Gamma)$. For $i\in V(G)\setminus{}N$ or $i\in N$ with $f(i)=0$,  $\SatM_{\cM^o}(i)= 2^{n-1}$. 
\end{lemma}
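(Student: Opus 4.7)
The plan is to show that in both cases the collective decision function $C_{\cM^o}$ does not depend on the coordinate $x_i$ of the initial decision vector, and then conclude by a standard pairing argument that exactly half of the $2^n$ initial vectors are satisfying.

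First I would recall that, by definition of the oblivious model, $C_{\cM^o}(x)=1$ if and only if $|F(X(x)\cap N)|\ge q$. So the dependence of $C_{\cM^o}(x)$ on $x_i$ is exactly the dependence of $F(X(x)\cap N)$ on whether $i\in X(x)$ or not.

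In the first case $i\in V(G)\setminus N$, the set $X(x)\cap N$ is literally unchanged when $x_i$ is flipped, so there is nothing to prove beyond this remark.

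The second case $i\in N$ with $f(i)=0$ requires a short argument. The key observation is that when $f(i)=0$, the threshold condition $|P_G(i)\cap F^{t-1}(X)|\ge f(i)$ is trivially satisfied, so $i\in F^1(X)$ for every initial set $X$. I would then prove, for every $X\subseteq V$,
\[
F(X)\;=\;F(X\cup\{i\}),
\]
which by symmetry gives $F(X)=F(X\setminus\{i\})$. The non-trivial inclusion $F(X\cup\{i\})\subseteq F(X)$ I would establish by induction on $t$, showing $F^t(X\cup\{i\})\subseteq F^{t+1}(X)$: the base case $t=0$ uses precisely that $i\in F^1(X)$ because $f(i)=0$, and the inductive step follows from the monotonicity of the one-step activation rule with respect to the already-activated set. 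Taking $t=n$ and using saturation ($F^{n+1}(X)=F^n(X)$) yields the inclusion. Applying this with the initial set $X(x)\cap N$ shows that flipping $x_i$ does not change $F(X(x)\cap N)$, hence does not change $C_{\cM^o}(x)$.

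Finally, I would pair up the initial decision vectors. Define $x^{(i)}$ as the vector obtained from $x$ by flipping the $i$-th coordinate. The $2^n$ vectors are partitioned into $2^{n-1}$ such pairs. Within each pair, $C_{\cM^o}(x)=C_{\cM^o}(x^{(i)})$ but $x_i\neq x^{(i)}_i$, so exactly one of the two vectors in the pair satisfies $x_i=C_{\cM^o}(x)$. Summing gives $\SatM_{\cM^o}(i)=2^{n-1}$. The only place where care is needed is the induction for the $f(i)=0$ case; everything else is a direct bookkeeping argument.
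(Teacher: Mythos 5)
Your proof is correct and follows essentially the same route as the paper: both arguments reduce to showing that $C_{\cM^o}(x)$ is independent of $x_i$ (the paper states the analogous fact $F(X)=F(X\cup Z)$ for the set $Z$ of zero-label players as an observation, where you supply the induction) and then count one satisfying completion per assignment of the remaining $n-1$ coordinates.
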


\begin{proof}
Let $\cM= \cM^o(\Gamma)=(G,f,q,N)$,  let $Z=\{i\in N\mid f(i)=0\}$. 
Let $x$ be an initial decision vector and set $X=X(x) \cap (N\setminus Z)$.
Oserve that $F(X)=F(X\cup Z)$ and therefore the final decision is independent on the initial decision of those players in the set $Z$.

For $i\notin N$ or $i\in Z$,  
we provide the necessary and sufficient condition for
actor  $i$ being  satisfied. We consider two cases:
\begin{enumerate}
  \item If $X\in\cW$, $C_\cM(x)=1$, and player $i$ is satisfied only when $x_i=1$.
  \item If $X\notin\cW$, $C_\cM(x)=0$, and player $i$ is satisfied only when $x_i=0$.
\end{enumerate}
Therefore, for an initial decision vector of  the players in $N\setminus Z$, there is only one way, for player $i$, to complete it  in such a way that the collective decision coincides with player $i$'s decision. Thus we obtain $\SatM(i)=2^{n-1}$.
\qqed\end{proof}

In the following result we make use of a particular game construction. Let $(G,f,q,N)$ be an influence game. Define $R(G,f,i)$ as the influence graph $(G',f')$, where
$G'=G[V(G)\setminus(\tF_i \cup \{i\})]$, for $j\notin S_G(i)$, $f'(j)=f(j)$, and, for $j\in S_G(i)$, $f'(j)=\max\{f(j)-1,0\}$.

\begin{lemma}
\label{lem:SAT_for_oblivious2}
Let $\Gamma=(G,f,q,N)$ be an influence game. For $i\in N$ with $f(i)>0$ and $P_G(i)=\emptyset$, we have that 
$$\SatM_{\cM^o}(i) = 2^{n-1}+2^{n-|N|}\sum_{j=1}^{1+|\tF_i|} |F_{q-j}(N\setminus\{i\},R(G,f,i))|.$$
\end{lemma}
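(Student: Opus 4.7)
The plan is to invoke Lemma~\ref{lem:Rae=Sat} together with Eq.~(\ref{eq:Sat_Rae_Bz}) to write
\[
\SatM_{\cM^o}(i) \;=\; \RaeM_{\Gamma}(i) \;=\; 2^{n-1} + \BZ_{\Gamma}(i),
\]
which reduces the lemma to proving that $\BZ_{\Gamma}(i) = 2^{n-|N|}\sum_{j=1}^{1+|\tF_i|}|F_{q-j}(N\setminus\{i\},R(G,f,i))|$. Since the collective decision function of $\cM^o$ depends only on $X\cap N$, a coalition $X\subseteq V$ is a swing for $i$ exactly when $Y:=X\cap N$ contains $i$ and satisfies $|F(Y)|\ge q > |F(Y\setminus\{i\})|$. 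Writing $Y = Y'\cup\{i\}$ with $Y'\subseteq N\setminus\{i\}$, and observing that the $2^{n-|N|}$ vertices of $V\setminus N$ can be chosen freely, gives
\[
\BZ_{\Gamma}(i) \;=\; 2^{n-|N|}\cdot \bigl|\{Y'\subseteq N\setminus\{i\} \mid |F(Y'\cup\{i\})|\ge q > |F(Y')|\}\bigr|.
\]

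The crux is a structural identity linking the spread $F$ on $G$ to the spread $F'$ on $R(G,f,i)$. Because $P_G(i)=\emptyset$, placing $i$ in the initial activation activates $i$ itself at round zero and, at round one, every vertex of $\tF_i$ (which by definition has $i$ as its only predecessor and threshold $1$). For any other successor $k\in S_G(i)\setminus\tF_i$, having $i$ in $F$ contributes exactly one to $k$'s count of active predecessors, and this is precisely what the reduction $f'(k) = \max\{f(k)-1,0\}$ in $R(G,f,i)$ encodes. A round-by-round induction on the activation process then yields
\[
F(Y'\cup\{i\}) \;=\; \{i\}\cup \tF_i \cup F'(Y'),
\]
and a parallel, simpler argument in which neither $i$ nor any $\tF_i$-vertex ever becomes active shows $F(Y') = F'(Y')$. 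Passing to cardinalities, $|F(Y'\cup\{i\})| = 1 + |\tF_i| + |F'(Y')|$ and $|F(Y')| = |F'(Y')|$.

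Substituting these identities, the swing condition becomes
\[
q - 1 - |\tF_i| \;\le\; |F'(Y')| \;\le\; q - 1,
\]
i.e.\ $|F'(Y')| = q-j$ for some $j\in\{1,\ldots, 1+|\tF_i|\}$. Partitioning the swing count by the value of $|F'(Y')|$ and absorbing the free choice of the non-player vertices of $V(R(G,f,i))\setminus (N\setminus\{i\})$ into the definition of $F_k(\cdot,\cdot,\cdot)$, each class contributes $|F_{q-j}(N\setminus\{i\}, R(G,f,i))|$; summing over $j$ delivers the claimed formula. The principal obstacle is the structural identity of the previous paragraph: one must verify, round by round, that the threshold reduction in $R(G,f,i)$ neither introduces spurious activations nor destroys genuine ones, and that the removal of $\tF_i\cup\{i\}$ preserves the remainder of the cascade. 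This delicate bookkeeping rests on the hypothesis $P_G(i)=\emptyset$ and on the defining properties of $\tF_i$ (unique predecessor $i$, threshold $1$).
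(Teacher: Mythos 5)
Your proof follows essentially the same route as the paper's. The paper obtains the $2^{n-1}$ term by a direct three-case analysis of coalitions, which is just a re-derivation of the Rae--Banzhaf relation you invoke via Lemma~\ref{lem:Rae=Sat} and Eq.~(\ref{eq:Sat_Rae_Bz}); it then, exactly as you do, identifies the swings of $i$ with the sets $Y'$ whose expansion in $R(G,f,i)$ has size $q-j$ for some $1\le j\le 1+|\tF_i|$, the factor $2^{n-|N|}$ accounting for the free coordinates outside $N$. So the two arguments agree in substance, and yours is somewhat more explicit about why the correspondence should hold.

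That explicitness, however, exposes a step that does not go through as stated (and that the paper's proof glosses over in exactly the same way): the identity $F(Y')=F'(Y')$ is in general only an inclusion $F(Y')\subseteq F'(Y')$, because $R(G,f,i)$ lowers the threshold of every successor of $i$, so a vertex $k\in S_G(i)\setminus\tF_i$ that accumulates exactly $f(k)-1$ active predecessors during the cascade from $Y'$ is activated in $R(G,f,i)$ but not in $G$. As a result, the implication ``$Y'$ losing implies $|F'(Y')|\le q-1$'' --- which is what restricts the sum to $j\ge 1$ --- can fail: one can have $Y'\in\cL$ and $Y'\cup\{i\}\in\cW$ while $|F'(Y')|\ge q$ (take $i,m\in N$ and $t\ge 2$ non-players, each with predecessor set $\{i,m\}$ and threshold $2$, and $q=3$), so the formula undercounts the swings. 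The converse direction is fine, since $F(Y')\subseteq F'(Y')$ gives $|F'(Y')|\le q-1\Rightarrow Y'\in\cL$. The gap is harmless where the lemma is actually used --- in Theorem~\ref{cor:Fnk_SAT} the vertex is isolated, so $S_G(i)=\emptyset$ and $F'=F$ --- but to prove the statement for arbitrary influence games, both your argument and the paper's need an additional hypothesis guaranteeing that the threshold reductions create no new activations from seeds avoiding $i$.
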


\begin{proof}
Let $\cM= \cM^o(\Gamma)=(G,f,q,N)$, let 
$x$ be an initial decision vector and set $X=X(x) \cap N$.
For $i\in N$ with $f(i)>0$ and $P_G(i)=\emptyset$,  we provide the conditions for $i$ to be satisfied. We consider three cases. 
\vspace{-\topsep}
\begin{enumerate}
  \item If $X\setminus\{i\}\in\cW$, then $C_\cM(x)=1$, so it must be $x_i=1$.
  \item If $X\setminus\{i\}\notin\cW$ and $X\notin\cW$, then $C_\cM(x)=0$, so it must be $x_i=0$.
  \item If $X\setminus\{i\}\notin\cW$ and $X\in\cW$, then $C_\cM(x)=1$, so it must be $x_i=1$.
\end{enumerate}
\vspace{-\topsep}
The first two cases provide a total of $2^{n-1}$ initial decision vectors for which the collective decision coincides with the initial decision of player $i$. To count the initial decision vectors for the third case, we consider the influence graph $R(G,f,i)$.
We have to separate from the rest those vertices in the set $\tF_i$ that can be influenced directly and only by $i$. Observe that all the vertices in $S_G(i)\setminus\tF_i$ have in-degree at least $2$. Now, for a coalition $Y$, it holds that $Y\in\cL$ and $Y\cup\{i\}\in\cW$ if and only if $Y\in F_{q-j}(N\setminus\{i\},R(G,f,i))$, for some $1\leq j\leq1+|\tF_i|$. Therefore, since the influence model is oblivious, there are $2^{n-|N|}\sum_{j=1}^{1+|\tF_i|}|F_{q-j}(N\setminus\{i\}, R(G,f,i))|$ additional initial decision vectors $z$ with $z_i=C_\cM(z)=1$.
\qqed\end{proof}

Note that, for $i\in N$ with $f(i)>0$, $P_G(i)=S_G(i)=\emptyset$ and   $\tF_i=\emptyset$,  
$$\SatM(i)=2^{n-1}+2^{n-|N|}|F_{q-1}(N\setminus\{i\},R(G,f,i))|.$$

Note also that, in \oOLF models, $N=\tL\cup\tI$, $V\setminus N=\tF$, and for $i\in N$, $P(i)=\emptyset$. Therefore, Lemmas~\ref{lem:SAT_for_oblivious} and~\ref{lem:SAT_for_oblivious2} provide the formulas for the satisfaction measure in those models. These results also show that, as expected, the opinion leaders have always a satisfaction greater than or equal to that of  the independent actors, and that both have always a satisfaction greater or equal than the followers.

The previous lemma does not provide a formula for the case in which the vertex $i\in N$ has $f(i)>0$ and $P_G(i)\neq\emptyset$. Although this case never occurs in \gOLF, it can be handled in other models by considering another graph construction.
Let $R_2(G,f,i)$ be the influence graph $(G'',f'')$, where $V(G'')=(V(G)\setminus\tF_i)\cup Z$, with $Z=\{z_1,\dots z_{2n}\}$ a set of new vertices, and $E(G'')$ is formed by the arcs in $G[V(G)\setminus\tF_i]$ and the set $\{(i,z_j)\mid 1\leq j\leq 2n\}$. The labeling function is given by $f''(j)=f(j)$, for $j\in V(G)\setminus S_G(i)$, by $f''(j)=\max\{f(j)-1,0\}$, for $j\in S_G(i)$, and by $f''(z_j)=1$, for $1\leq j\leq 2n$. The following result can be proved in the same way than Lemma~\ref{lem:SAT_for_oblivious2}, by replacing $R(G,f,i)$ by $R_2(G,f,i)$.

\begin{lemma}
\label{lem:SAT_for_oblivious3}
Let $(G,f,q,N)$ be an oblivious influence model, for $i\in N$ with $f(i)>0$ and $P_G(i)\neq\emptyset$,
$$\SatM(i)=2^{n-1}+2^{n-|N|}\sum_{j=1}^{1+|\tF_i|}|F_{q-j}(N\setminus\{i\},R_2(G,f,i))|$$
\end{lemma}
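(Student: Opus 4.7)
My plan is to replay the proof of Lemma~\ref{lem:SAT_for_oblivious2} step by step, with $R(G,f,i)$ replaced by $R_2(G,f,i)$. The same three-way case split on the winning status of $X$ and $X\setminus\{i\}$ applies, and cases 1--2 (either $X\setminus\{i\}\in\cW$, or $X\setminus\{i\}\notin\cW$ and $X\notin\cW$) jointly contribute $2^{n-1}$ to $\SatM(i)$ by the identical argument: for each of the $2^{n-1}$ choices of $X\setminus\{i\}$, precisely one value of $x_i$ leaves $i$ satisfied.

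The substantive work is in case 3, where $X\setminus\{i\}\notin\cW$, $X\in\cW$, and consequently $i\in X$. The new wrinkle relative to Lemma~\ref{lem:SAT_for_oblivious2} is that $P_G(i)\neq\emptyset$, so $i$ might already be activated from $(X\cap N)\setminus\{i\}$ through its predecessors. If it were, though, the spread would coincide with that from $X\cap N$, contradicting the strict transition $|F_G((X\cap N)\setminus\{i\})|<q\leq|F_G(X\cap N)|$. So case~3 still requires $i\notin F_G((X\cap N)\setminus\{i\})$. The $2n$ auxiliary followers $z_1,\ldots,z_{2n}$ in $R_2$ exist precisely to make this condition observable inside an $F_{R_2}$ count: because each $z_j$ has $i$ as its sole predecessor with threshold $1$, $i\in F_{R_2}(Y')$ immediately yields $|F_{R_2}(Y')|\geq 2n+1>q-1$. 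Hence restricting to $|F_{R_2}(Y')|=q-j$ with $1\leq j\leq 1+|\tF_i|$ automatically forces $i\notin F_{R_2}(Y')$, and in turn (since $R_2$ is at least as easy to spread through as $G$ on $V(G)\setminus(\tF_i\cup\{i\})$) also $i\notin F_G(Y')$.

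Under this restriction the $z_j$'s and $i$ play no role in the spread, so $R_2$ restricted to $V(G)\setminus(\tF_i\cup\{i\})$ coincides with $R(G,f,i)$ (same induced subgraph, same threshold drop on $S_G(i)\setminus\tF_i$ pre-paying $i$'s downstream contribution). The identity $|F_G(Y'\cup\{i\})|=|F_{R_2}(Y')|+1+|\tF_i|$ then follows by the same counting as in Lemma~\ref{lem:SAT_for_oblivious2} (the $+1$ for $i$, the $+|\tF_i|$ for its private followers, and no further correction because the threshold reduction already accounts for $i$'s influence once it is explicitly added). The equivalence ``$Y'\in\cL$ and $Y'\cup\{i\}\in\cW$ iff $Y'\in F_{q-j}(N\setminus\{i\},R_2(G,f,i))$ for some $1\leq j\leq 1+|\tF_i|$'' is preserved verbatim, and multiplying the resulting count by the $2^{n-|N|}$ arbitrary extensions over $V\setminus N$ and adding $2^{n-1}$ from cases 1--2 delivers the stated formula. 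The main obstacle is verifying the sensitivity gap: one must check carefully that $q-1<2n+1$ (immediate from $q\leq n$) really guarantees that whenever $|F_{R_2}(Y')|$ lands in the window $\{q-1-|\tF_i|,\ldots,q-1\}$ neither $i$ nor any $z_j$ is in the spread, so that the reduction to the Lemma~\ref{lem:SAT_for_oblivious2} counting is exact.
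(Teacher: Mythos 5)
Your proposal is correct and follows exactly the route the paper takes: the paper's own proof of this lemma is a one-line remark that the argument of Lemma~\ref{lem:SAT_for_oblivious2} goes through verbatim with $R_2(G,f,i)$ in place of $R(G,f,i)$. Your elaboration of why the $2n$ sentinel followers $z_1,\dots,z_{2n}$ force $i\notin F_{R_2}(Y')$ whenever $|F_{R_2}(Y')|\leq q-1\leq n-1$ (so that the count reduces to the Lemma~\ref{lem:SAT_for_oblivious2} situation) correctly supplies the detail the paper leaves implicit.
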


For our hardness result we consider a variation of the {\em counting vertex cover problem}~\cite{GJ79}:
\begin{quote}
\begin{description}
\item{\VC}
\item{Instance:} 
An undirected graph $G=(V,E)$.
\item{Output:}  
Number of vertex covers of $G$ with size $\frac{2}{3}|V|$, i.e.,\\
$\big|\big\{X\subseteq V\mid\text{ for }\{i,j\}\in E, \ \ \{i,j\}\cap X \neq \emptyset \text{ and } |X|=\frac{2}{3}|V|\big\}\big|$.
\end{description}
\end{quote}

It is known that the problem of computing in a graph the number of independent sets with size exactly $\frac13|V|$ is \#P-hard~\cite{Hof10}.
Hence, as the complement of an independent set is a vertex cover, the same result shows that \VC is \#P-hard.

\begin{figure}[t]
\centering
\begin{tikzpicture}[every node/.style={circle,scale=0.9}, >=latex]
\node(x) at (1.0,3.0)[label=above:\mbox{{\large$G$}}] {};
\node[draw](a1) at (0.0,2.5)[label=above:$1$] {};
\node[draw](a2) at (1.0,2.5)[label=above:$2$] {};
\node[draw](a3) at (2.0,2.5)[label=above:$3$] {};
\draw[-] (a1) to node[below] {$e_1$}(a2);
\draw[-] (a2) to node[below] {$e_2$}(a3);
\node(x) at (6.25,2.5)[label=above:\mbox{\large{$(G_1,f_1)$}}] {};
\node[draw](a) at (4.5,2.5)[label=above:$1$] {};
\node[draw](b) at (5.5,2.5)[label=above:$2$] {};
\node[draw](c) at (6.5,2.5)[label=above:$3$] {};
\node[draw](z) at (8.0,2.5)[label=above:$z$] {};
\node[draw](e11) at (3.0,0.5)[label=below:$e^1_1$] {};
\node[draw](e12) at (3.5,0.5)[label=below:$e^1_2$] {};
\node[draw](e21) at (4.5,0.5)[label=below:$e^2_1$] {};
\node[draw](e22) at (5.0,0.5)[label=below:$e^2_2$] {};
\node[draw](e31) at (6.0,0.5)[label=below:$e^3_1$] {};
\node[draw](e32) at (6.5,0.5)[label=below:$e^3_2$] {};
\node[draw](e41) at (7.5,0.5)[label=below:$e^4_1$] {};
\node[draw](e42) at (8.0,0.5)[label=below:$e^4_2$] {};
\node[draw](e51) at (9.0,0.5)[label=below:$e^5_1$] {};
\node[draw](e52) at (9.5,0.5)[label=below:$e^5_2$] {};
\node(x) at (3.25,0)[label=below:$E_1$] {};
\node(x) at (4.75,0)[label=below:$E_2$] {};
\node(x) at (6.25,0)[label=below:$E_3$] {};
\node(x) at (7.75,0)[label=below:$E_4$] {};
\node(x) at (9.25,0)[label=below:$E_5$] {};
\draw[->] (a) to node {}(e11);
\draw[->] (a) to node {}(e21);
\draw[->] (a) to node {}(e31);
\draw[->] (a) to node {}(e41);
\draw[->] (a) to node {}(e51);
\draw[->] (b) to node {}(e11);
\draw[->] (b) to node {}(e12);
\draw[->] (b) to node {}(e21);
\draw[->] (b) to node {}(e22);
\draw[->] (b) to node {}(e31);
\draw[->] (b) to node {}(e32);
\draw[->] (b) to node {}(e41);
\draw[->] (b) to node {}(e42);
\draw[->] (b) to node {}(e51);
\draw[->] (b) to node {}(e52);
\draw[->] (c) to node {}(e12);
\draw[->] (c) to node {}(e22);
\draw[->] (c) to node {}(e32);
\draw[->] (c) to node {}(e42);
\draw[->] (c) to node {}(e52);
\draw[->] (z) to node {}(e11);
\draw[->] (z) to node {}(e12);
\draw[->] (z) to node {}(e21);
\draw[->] (z) to node {}(e22);
\draw[->] (z) to node {}(e31);
\draw[->] (z) to node {}(e32);
\draw[->] (z) to node {}(e41);
\draw[->] (z) to node {}(e42);
\draw[->] (z) to node {}(e51);
\draw[->] (z) to node {}(e52);
\end{tikzpicture}
\caption{Influence graph $(G_1,f_1)$ obtained from an undirected graph $G$.\label{fig_demo_Th1}}
\end{figure}
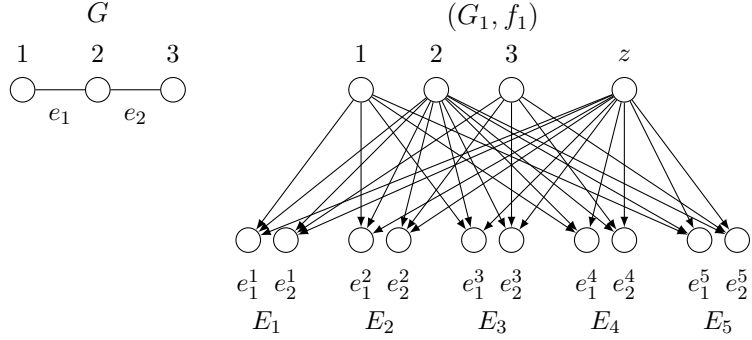

\begin{theorem}
\label{the:Fnk_SAT_oOLF}
The \EXPN problem for \oOLF is \#P-hard.
\end{theorem}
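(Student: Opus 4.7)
The plan is to reduce \VC to \EXPN restricted to \oOLF, following the construction sketched in Figure~\ref{fig_demo_Th1}. Given an instance $G=(V,E)$ of \VC with $|V|=n$ and $|E|=m$, I assume WLOG that $3\mid n$ and that $G$ has no isolated vertices (otherwise I pad $V$ with a few controlled dummy vertices and invert the resulting binomial correction afterwards). Set the parameter $n'=n+1$, which satisfies $n'>\tfrac{2n}{3}$. Build a two-layered bipartite digraph $G_1$ whose opinion-leader layer is $V\cup\{z\}$, for a fresh vertex $z$, and whose followers are $\{e_i^j : 1\le i\le n',\, 1\le j\le m\}$. For each edge $e_j=\{u_j,v_j\}\in E$ and each $i$, add the arcs $(u_j,e_i^j)$, $(v_j,e_i^j)$ and $(z,e_i^j)$. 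Every follower has in-degree $3$ (odd), so the triple $\cM=(G_1,1/2,q)$ (for any quota $q$) is an \oOLF; the associated influence graph carries label $f_1(e_i^j)=\lceil 3/2\rceil=2$ on followers, and $N(\cM)=V\cup\{z\}$.

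For $Y\subseteq N$, write $s=|Y\cap V|$ and let $c_Y$ count the edges of $G$ whose follower copies activate under $Y$. Since each follower has threshold $2$ and receives input from $z$ plus the two endpoints of its edge, a follower for $e_j$ fires iff either $z\in Y$ and at least one endpoint of $e_j$ lies in $Y$, or $z\notin Y$ and both endpoints lie in $Y$. Hence
\[
|F(Y)| \;=\; \mathbf{1}[z\in Y] \;+\; s \;+\; n'\,c_Y .
\]
Set the target $k=1+\tfrac{2n}{3}+n'm$. If $z\in Y$, then $|F(Y)|=k$ rewrites as $s=\tfrac{2n}{3}-n'(m-c_Y)$; since $n'>\tfrac{2n}{3}$, any $c_Y<m$ forces $s<0$, so $c_Y=m$ and $s=\tfrac{2n}{3}$, meaning that $Y\cap V$ is a vertex cover of $G$ of size $\tfrac{2n}{3}$. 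If $z\notin Y$, then $|F(Y)|=k$ rewrites as $n'(m-c_Y)=s-1-\tfrac{2n}{3}\le \tfrac{n}{3}-1<n'$, forcing $c_Y=m$ and, by the no-isolated-vertex assumption, $Y\cap V=V$ with $s=n$; this in turn yields $n=3$, a base case that I dispatch directly.

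Combining the two cases, the coalitions $Y\subseteq N$ with $|F(Y)|=k$ are in bijection with the vertex covers of $G$ of size $\tfrac{2n}{3}$ via $Y\mapsto Y\cap V$. Because \EXPN ranges over all subsets of $V(G_1)$ and $|V(G_1)\setminus N|=n'm$, this yields
\[
|F_k(N,G_1,f_1)| \;=\; 2^{n'm}\cdot \#\{\text{vertex covers of } G \text{ of size } \tfrac{2n}{3}\},
\]
and one division recovers the \VC answer in polynomial time, establishing the claimed \#P-hardness.

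The main obstacle is the calibration of $n'$: it must be strictly larger than $\tfrac{2n}{3}$ so that the target $k$ isolates exactly the vertex-cover coalitions, excluding both non-covers and the $z\notin Y$ coalitions. The remaining subtleties (isolated vertices, non-divisibility of $n$ by $3$, and the degenerate $n=3$ case) are dealt with by standard polynomial-time pre-/post-processing that subtracts a computable additive correction or inverts a known multiplicative binomial factor, preserving the Turing reduction.
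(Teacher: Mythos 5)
Your reduction is correct and is essentially the paper's own construction: the same gadget (about $|V|$ many copies of each edge vertex with threshold $2$, plus a universal leader $z$ feeding every copy, so each follower has odd in-degree $3$ and the model is an \oOLF with $r=1/2$), the same calibrated target $k$, and the same case analysis on whether $z$ belongs to the coalition. The only differences are cosmetic --- you take $n+1$ copies where the paper takes $n+2$, and you eliminate the $z\notin Y$ coalitions via the no-isolated-vertex assumption together with an $n=3$ base case, where the paper instead assumes $G$ connected with $n\geq 6$ --- so nothing further is needed.
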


\begin{proof}
We provide a reduction from the \VC problem.
Let $G=(V,E)$ be an instance of \VC. Without loss of generality we assume that $G$ is connected, $n=|V|$ is a multiple of $3$ and $n\geq 6$. Let $m=|E(G)|$ and $E=\{e_1,\dots, e_m\}$.

We construct a two layered bipartite graph $G_1$ associated to $G$ as follows (see Figure~\ref{fig_demo_Th1} for an example). The set of vertices is $V(G_1)=V\cup E_1\cup E_2\cup\dots\cup E_{n+2}\cup\{z\}$, where $z$ is a new vertex and $E_j=\{e_1^j,\dots, e_m^j\}$, $1\leq j\leq n+2$, is formed by vertices associated to the edges of $E$. Observe that $n_1=|V(G_1)|=n+(n+2)m+1$. The set of arcs is the following:
\begin{align*}
E(G_1) =  \{(u,e_k^j) &\mid u\in V, 1\leq j\leq n+2,  1\leq k\leq m \text{ and }  u\in e_k\} \\
          &\cup \{(z,a) \mid a\in E_j, 1\leq j\leq n+2\}.
\end{align*}
Note that in $G_1$  all the vertices have in-degree either $0$ or $3$.
For the labeling function $f_1$ associated to the influence graph of the game  $\Gamma(G_1,1/2,n_1)$ we have $f_1(u)=1$, for $u\in V$, $f_1(z)=1$, and $f_1(u)=2$, for $u\notin(V\cup\{z\})$.

Now we define the reduction from \VC to \EXPN, which associates to $G$ the following instance $h(G)$ for the \EXPN problem:
$$h(G)= \left((G_1,f_1), V\cup\{z\},\frac{2}{3}n+(n+2)m+1\right).$$

Let $X\subseteq V$ and let $\alpha=|X|$. We analyze the expansion of the sets $X\cup\{z\}$ and $X$ in the influence graph $(G_1,f_1)$.

When the initial activation set is $X\cup\{z\}$, we have two cases, either $X$ is a vertex cover or not. When $X$ is a vertex cover, all vertices corresponding to edges get activated, so $|F(X\cup\{z\})|=\alpha + (n+2)m+1$. This last quantity is equal to the required size only when $\alpha=\frac{2}{3}n$. When $X$ is not a vertex cover, then $\alpha\leq n-2$ and at least one edge $e\in E$ is not covered. In consequence,   $F(X\cup\{z\})$ can not influence all the vertices in $E_1\cup\ldots\cup E_{n+2}$. Therefore, $|F(X\cup\{z\})|\leq \alpha+(n+2)(m-1)+2\leq n-2 + (n+2)m - (n+2) + 2 = (n+2)m - 2$ which is strictly smaller than the required size.

Now consider the case when the initial activation set is $X$. Recall that $G$ is connected. If for $\{u,v\}\in E$ it holds that $\{u,v\}\subseteq X$, then we have $|F(X\cup\{z\})| = n+(n+2)m$, which is greater than the required size. Otherwise, $F(X)$ can not influence all the vertices in $E_1\cup\ldots\cup E_{n+2}$, we have $\alpha\leq n-1$, and hence $|F(X)|\leq\alpha + (n+2)(m-1)\leq n-1 + (n+2) m - (n+2) = (n+2) m - 3$ which is strictly smaller than the required size.

From the previous case analysis, we have that the elements in $F_k(V\cup\{z\})$, for $(G_1,f_1)$, are in a one-to-one correspondence with the vertex covers of size $\frac{2}{3}n$ in $G$. As the reduction can be computed trivially in polynomial time, the claim holds.
\qqed\end{proof}

The hardness of the \EXPN problem does not rule out the possibility of having some particular cases for which the \EXPN problem is computationally easy. For example, when $f$ is strictly positive and the parameter $k$ of the problem is smaller than the minimum label over the actors not in $N$, i.e., $k<\min\{f(i)\mid i\in V\setminus N\}$, it is easy to see that, for an oblivious influence model, $|F_k(N)|=\binom{n-|N|}{k}$.

Now we can combine the previous results of this section to provide a hardness proof for the \SAT problem.

\begin{theorem}
\label{cor:Fnk_SAT}
The \SAT problem for \oOLF is \#P-hard.
\end{theorem}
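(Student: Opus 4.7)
The plan is to give a polynomial-time Turing reduction from \EXPN on \oOLF instances, which is \#P-hard by Theorem~\ref{the:Fnk_SAT_oOLF}, to \SAT on \oOLF instances. Given an \EXPN instance formed by an \oOLF $\cM=(G,1/2,q)$ (the quota $q$ is irrelevant for \EXPN) together with a target size $k$, I would build a new \oOLF $\cM'=(G',1/2,k+1)$ where $G'$ is obtained from $G$ by adding a single new isolated vertex $i^*$. Since no edges are introduced, every follower still has odd in-degree, so $\cM'$ is indeed an \oOLF; moreover $i^*\in\tI(G')\subseteq N(\cM')$, with $f'(i^*)=1>0$, $P_{G'}(i^*)=\emptyset$, $S_{G'}(i^*)=\emptyset$, and therefore $\tF_{i^*}=\emptyset$.

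The next step is to use Lemma~\ref{lem:oOLF} to transfer the computation of $\SatM_{\cM'}(i^*)$ to the oblivious influence model $\cM^o(\Gamma(\cM'))$, and then apply Lemma~\ref{lem:SAT_for_oblivious2} to the vertex $i^*$. Because $|\tF_{i^*}|=0$ the sum in that lemma collapses to the single term indexed by $j=1$, and because $i^*$ has no successors and no induced followers the construction $R(G',f',i^*)$ returns exactly the original influence graph $(G,f)$ with the original player set $N=N(\cM)$. Writing $n=|V(G)|$, this yields
\begin{equation*}
\SatM_{\cM'}(i^*) \;=\; 2^{n} \;+\; 2^{\,n-|N|}\cdot|F_{k}(N,G,f)|,
\end{equation*}
since $q'-1=k$ by our choice of quota. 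Solving for $|F_k(N,G,f)|$ gives a polynomial-time Turing reduction from \EXPN to \SAT on \oOLF, which together with Theorem~\ref{the:Fnk_SAT_oOLF} yields the \#P-hardness of \SAT.

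The only delicate point, which I regard as the main obstacle, is the bookkeeping that lets a single satisfaction query extract exactly the desired coefficient $|F_k|$: the quota of the new \oOLF must be set to $k+1$ so that Lemma~\ref{lem:SAT_for_oblivious2} indexes $F_{q'-1}=F_k$; the added vertex must be an independent actor (hence $P_{G'}(i^*)=\emptyset$ and $f(i^*)=1>0$, as required by the lemma) with no successors (so that $|\tF_{i^*}|=0$ and the sum reduces to one term); and the construction must not disturb the odd in-degree property of followers, so that $\cM'$ remains within the \oOLF class. All three conditions hold for the trivial construction "add one isolated vertex'', and the verification is immediate, so once these invariants are checked the hardness follows directly.
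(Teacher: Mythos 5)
Your proposal is correct and follows essentially the same route as the paper's proof: both add a single isolated vertex with label $1$, set the quota to $k+1$, and invoke Lemma~\ref{lem:SAT_for_oblivious2} (observing that $\tF_{i^*}=\emptyset$ and $R(G',f',i^*)=(G,f)$) so that one satisfaction query recovers $|F_k(N,G,f)|$. Your explicit verification that the constructed instance remains an \oOLF, passing through Lemma~\ref{lem:oOLF}, is a point the paper leaves largely implicit.
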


\begin{proof}
We prove hardness by showing a polynomial time reduction from the \EXPN problem on bipartite influence graphs to the \SAT problem. Consider an instance of \EXPN given by a bipartite influence graph $(G,f)$, a set $N\subseteq V(G)$ and an integer~$k$.  Let $(G',f')$ be the influence graph obtained from $(G,f)$ by adding an isolated vertex $z$ with label~$1$. Finally, we consider the influence game $\Gamma(G,2/3,k+1)=(G',f',N\cup \{z\},k+1)$ and the instance $(\Gamma(G,f'),z)$ of the \SAT problem.

In order to compute $\SatM(z)$ in $\Gamma(G,f)$, according to Lemma~\ref{lem:SAT_for_oblivious2} we have to consider the reduced influence graph $R(G',f',z)$ and the parameter $q=k+1$. By construction $R(G',f',z)=(G,f)$ and thus we have $$\SatM(z)=2^n+ 2^{n+1-|N|} |F_k(N,(G,f))|.$$ Therefore, if we could solve the \SAT problem in polynomial time, we would also be able to solve \EXPN in polynomial time.
\qqed\end{proof}

As a consequence of the previous result and Lemma~\ref{lem:oOLF} we have the following.

\begin{corollary}
The \SAT problem for \gOLF and the oblivious and non-oblivious influence models is \#P-hard.
\end{corollary}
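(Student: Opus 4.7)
The plan is to piggy-back on Theorem~\ref{cor:Fnk_SAT}, which already gives \#P-hardness for the most restricted of the three model classes in the statement, and then to transport this hardness to the other two classes via the structural correspondences established in Lemmas~\ref{lem:OLF_as_IG} and~\ref{lem:oOLF}. The key conceptual point I would highlight at the outset is that the satisfaction measure depends \emph{only} on the pair $(V,C_\cM)$: if two models $\cM_1,\cM_2$ have the same set of actors and $C_{\cM_1}=C_{\cM_2}$ as Boolean functions on $\{0,1\}^n$, then $\SatM_{\cM_1}(i)=\SatM_{\cM_2}(i)$ for every actor $i$. All three reductions will be instances of this general observation.

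First, for \gOLF the claim is immediate: by Definition, every \oOLF is a \gOLF (one simply requires $r=1/2$ and odd follower in-degree), so the \#P-hard family of instances produced in the proof of Theorem~\ref{cor:Fnk_SAT} is already a family of \gOLF instances. The identity map is a trivial polynomial-time reduction.

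Second, for oblivious influence models I would invoke Lemma~\ref{lem:oOLF}. Given an \oOLF instance $\cM=(G,1/2,q)$ together with an actor $i$, the reduction outputs the oblivious influence model $\cM'=\cM^o(\Gamma(\cM))$ on the same vertex set $V(G)$, together with the same distinguished actor $i$. The construction of $\Gamma(\cM)=(G,f,q,N)$ is polynomial-time computable from $\cM$. Lemma~\ref{lem:oOLF} gives $C_\cM=C_{\cM'}$, and since the underlying set of actors (and its cardinality $n$) is preserved, $\SatM_\cM(i)=\SatM_{\cM'}(i)$. Hence a polynomial-time algorithm for \SAT on oblivious influence models would yield one for \SAT on \oOLF, contradicting Theorem~\ref{cor:Fnk_SAT}.

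Third, for non-oblivious influence models the argument is identical in shape, but using Lemma~\ref{lem:OLF_as_IG} instead: the reduction sends the \oOLF instance $(\cM,i)$ to $(\cM^n(\Gamma(\cM)),i)$, and Lemma~\ref{lem:OLF_as_IG} (applied to $\cM$ viewed as a \gOLF) again gives coincidence of collective decision functions and hence of satisfaction values. I do not foresee any real obstacle here; the only thing worth double-checking while writing the proof is that the two constructions $\cM\mapsto\cM^o(\Gamma(\cM))$ and $\cM\mapsto\cM^n(\Gamma(\cM))$ really do keep the actor set $V(G)$ intact (so that the "same $i$" makes sense on both sides of the reduction), which is clear from the definition of $\Gamma(\cM)$ given just before Lemma~\ref{lem:OLF_as_IG}.
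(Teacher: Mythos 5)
Your proposal is correct and follows essentially the same route as the paper, which derives the corollary from Theorem~\ref{cor:Fnk_SAT} together with the inclusion of \oOLF in \gOLF and the model correspondences of Lemmas~\ref{lem:OLF_as_IG} and~\ref{lem:oOLF}; your explicit observation that $\SatM$ depends only on the actor set and the collective decision function is exactly the (implicit) justification the paper relies on.
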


As a consequence of the previous result and Lemma~\ref{lem:Rae=Sat}, the problems of computing the Rae index and the Banzhaf value are \#P-hard for the families of simple games associated to oblivious and non-oblivious influence models, \oOLF and \gOLF.

\section{Strong hierarchical influence models}
\label{sec:SATpol}

In this section we focus our attention on one particular topology of the influence graph that provide oblivious and non-oblivious influence models where the \SAT problem is polynomial time solvable. In the previous section we have shown that  the \SAT problem is hard for two layered bipartite graphs.  Thus,  the  potentially  tractable subfamilies of influence graphs must overcome this difficulty by considering stronger or simpler topologies. Our first subfamilies are based on multilayered bipartite graphs with strong restrictions on the connections among the layers. In a multilayered bipartite graph, there are some intermediate actors that we called {\em mediators}, who act as intermediate layers of influence expansion between opinion leaders and followers.

\begin{figure}[t]
\centering
\begin{tikzpicture}[every node/.style={circle,scale=0.8}, >=latex]
\node[draw](a) at (0.5,3.0)[label=above:$1$] {1};
\node[draw](b) at (1.5,3.0)[label=above:$2$] {1};
\node[draw](c) at (3.5,3.0)[label=above:$3$] {1};
\node[draw](d) at (0.0,2.0)[label=below:$4$] {2};
\node[draw](e) at (1.0,2.0)[label=below:$5$] {2};
\node[draw](f) at (2.0,2.0)[label=below:$6$] {2};
\node[draw](g) at (3.0,2.0)[label=below:$7$] {3};
\node[draw](h) at (4.0,2.0)[label=below:$8$] {3};
\node[draw](i) at (1.5,1.0)[label=below:$9$] {3};
\node[draw](j) at (2.5,1.0)[label=below:$10$]{4};
\node[draw](k) at (2.0,0.0)[label=below:$11$]{2};
\node[draw](l) at (5.5,3.0)[label=above:$12$]{1};
\node[draw](m) at (5.0,2.0)[label=below:$13$]{1};
\node[draw](n) at (6.0,2.0)[label=below:$14$]{1};
\node[draw](o) at (7.0,3.0)[label=above:$15$]{1};
\node[draw](p) at (8.0,3.0)[label=above:$16$]{1};
\draw[->] (a) to node {}(d);
\draw[->] (a) to node {}(e);
\draw[->] (a) to node {}(f);
\draw[->] (b) to node {}(d);
\draw[->] (b) to node {}(e);
\draw[->] (b) to node {}(f);
\draw[->] (c) to node {}(g);
\draw[->] (c) to node {}(h);
\draw[->] (d) to node {}(i);
\draw[->] (d) to node {}(j);
\draw[->] (e) to node {}(i);
\draw[->] (e) to node {}(j);
\draw[->] (f) to node {}(i);
\draw[->] (f) to node {}(j);
\draw[->] (g) to node {}(i);
\draw[->] (g) to node {}(j);
\draw[->] (h) to node {}(i);
\draw[->] (h) to node {}(j);
\draw[->] (i) to node {}(k);
\draw[->] (j) to node {}(k);
\draw[->] (l) to node {}(m);
\draw[->] (l) to node {}(n);
\end{tikzpicture}
\caption{A strong hierarchical influence graph.\label{fig:oOLFM_t4}}
\end{figure}
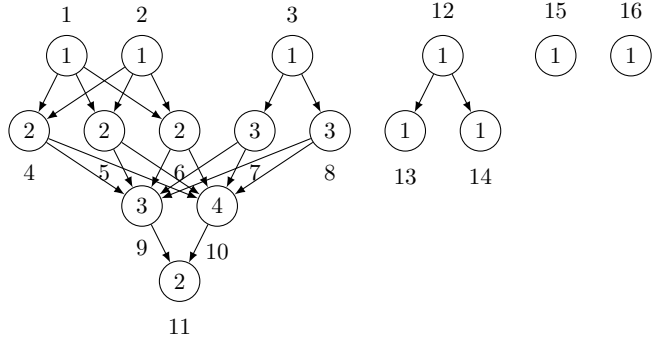

The family of graphs is defined recursively using two graph operations. 
Given two graphs $H_1$ and $H_2$ with $V(H_1)\cap V(H_2)=\emptyset$ their \emph{disjoint union} is the graph  $H_1+H_2 =(V(H_1)\cup V(H_2), E(H_1)\cup E(H_2))$. 
Given a graph $H$,  the \emph{one layer extension to a set  $V'\neq \emptyset$  of new vertices} ($V(H)\cap V'=\emptyset$) is   the graph $H\otimes V'$ constructed as follows
\begin{align*}
V(H \otimes V')= & V(H)\cup V' \mbox{ and}\\
E(H \otimes V')= & E(H)\cup \{(u,v)\mid u\in\tFI(H), v\in V'\}.
\end{align*}

Observe that we have  $\tL(H_1+H_2)=\tL(H_1) \cup \tL(H_2)$, $\tI(H_1+H_2)=\tI(H_1)\cup\tI(H_2)$, $\tF(H_1+H_2)=\tF(H_1)\cup \tF(H_2)$ and $\tFI(H_1+H_2)=\tFI(H_1) \cup \tFI(H_2)$. 
Furthermore,  $\tL(H \otimes V')=\tL(H)\cup \tI(H)$, $\tI(H \otimes V')=\emptyset$, and $\tF(H \otimes V')=\tFI(H \otimes V')=V'$. 

As base case we use graphs with only isolated vertices. The  family is completed  by taking the closure under the two graph  operations defined above.

\begin{definition}
\label{def:SHIM}
The family of {\em strong hierarchical graphs} is defined recursively as follows.
\begin{itemize}
\item The graph $I_a$, for $a>0$, is a strong hierarchical graph.
\item If $H_1$ and $H_2$ are disjoint strong hierarchical graphs, the graph $H_1 + H_2$ is a strong hierarchical graph.
\item If $H$ is a strong hierarchical graph and   $V'\neq \emptyset$  is a set of vertices with $V(H)\cap V' =\emptyset$,  the graph $H\otimes V'$ 
is a strong hierarchical graph.
\end{itemize}
A {\em strong hierarchical influence graph} is an influence graph $(G,f)$ where $G$ is a strong hierarchical graph.
A {\em strong hierarchical influence game} is an influence game $(G,f,q,N)$ where $G$ is a strong hierarchical graph and $N=\tL(G)\cup\tI(G)$.
\end{definition}

In Figure~\ref{fig:oOLFM_t4} we provide an example of a strong hierarchical influence graph. Observe that the vertices of a strong hierarchical graph can be partitioned into layers so that edges occur only between vertices of consecutive layers. Furthermore, by removing the vertices with out-degree zero in a connected strong hierarchical graph, we obtain a decomposition formed by connected strong hierarchical graphs and possibly a  set of independent vertices. By repeatedly applying this process we can obtain a decomposition allowing to reconstruct the graph from several independent sets using disjoint union and adequate layer extensions. The graph $G$ given in  Figure~\ref{fig:oOLFM_t4} can be decomposed  as
\[[([(H_1 \otimes \{4,5,6\}) + (H_2 \otimes \{7,8\})] \otimes \{9,10\}) \otimes \{11\}] + [ (H_3 \otimes \{13,14\}) + H_4]\]
where $H_1=(\{1,2\},\emptyset)$, $H_2=(\{3\},\emptyset)$, $H_3=(\{12\},\emptyset)$ and $H_4=(\{15,16\},\emptyset)$.

We start providing an algorithm to solve the \EXPN problem for strong hierarchical influence games. Our algorithm uses dynamic programming and uses the previous decomposition to guide the computation of some adequate tabulated values. 

\begin{lemma}
\label{lem:complete_t2}
Let $(G,f,q,N)$ be a strong hierarchical influence game. For $1\leq{}k\leq n$,
the values $|F_k(N,G,f)|$ can be computed in polynomial time.
\end{lemma}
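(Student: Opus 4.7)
The plan is to exploit the recursive structure of strong hierarchical graphs by a dynamic program. Since $N=\tL(G)\cup\tI(G)$ is precisely the set of sources of $G$, the coordinates of an initial decision vector on $V(G)\setminus N$ do not affect $F(X\cap N)$, so
\[
|F_k(N,G,f)| \;=\; 2^{n-|N|}\,\bigl|\{Y\subseteq N \mid |F(Y)|=k\}\bigr|,
\]
and it suffices to count, for each $k$, the subsets $Y\subseteq N$ whose spread has size exactly $k$.

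I would fix a decomposition tree for $G$ under $I_a$, $+$, and $\otimes$; such a tree can be recovered from $G$ in polynomial time by alternately splitting into connected components (giving a $+$ node) and peeling off the current set $\tF$ together with its common predecessor set (giving an $\otimes$ node). For every subgraph $H$ appearing in the tree, tabulate
\[
T_H[k,\ell] \;=\; \bigl|\{Y\subseteq N_0(H) \mid |F_H(Y)|=k,\ |F_H(Y)\cap\tFI(H)|=\ell\}\bigr|,
\]
where $N_0(H)$ is the set of sources of $H$ and $F_H$ is the spread in $H$. The secondary parameter $\ell$ records how many of the current \emph{interface} vertices $\tFI(H)$ are activated, and this is the only piece of information about the interior of $H$ that a subsequent layer extension will ever consult, since in $H\otimes V'$ every $v\in V'$ has predecessor set exactly $\tFI(H)$.

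The recursion is then immediate. For the base case, $I_a$ has no edges, so $F(Y)=Y\subseteq\tFI(I_a)$ and $T_{I_a}[k,\ell]=\binom{a}{k}$ when $k=\ell$ and $0$ otherwise. For disjoint union the two halves evolve independently, yielding the double convolution
\[
T_{H_1+H_2}[k,\ell] \;=\; \sum_{k_1+k_2=k}\ \sum_{\ell_1+\ell_2=\ell} T_{H_1}[k_1,\ell_1]\,T_{H_2}[k_2,\ell_2].
\]
For a layer extension $H'=H\otimes V'$, I would precompute $b(\ell)=|\{v\in V' \mid f(v)\leq\ell\}|$, the number of vertices in $V'$ that activate once $\ell$ vertices of $\tFI(H)$ are on; then $|F_{H'}(Y)|=|F_H(Y)|+b(\ell)$ and $|F_{H'}(Y)\cap\tFI(H')|=b(\ell)$, so
\[
T_{H'}[k',\ell'] \;=\; \sum_{\ell\,:\,b(\ell)=\ell'} T_H[k'-\ell',\,\ell].
\]

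The answer is then $|F_k(N,G,f)|=2^{n-|N|}\sum_\ell T_G[k,\ell]$. Each tree node stores $O(n^2)$ entries; the disjoint-union convolution dominates the per-node cost at $O(n^4)$, and the decomposition has $O(n)$ nodes, so the procedure runs in polynomial time. The only delicate point I anticipate is verifying that the pair $(k,\ell)$ really is a sufficient summary of the interior state across an $\otimes$ step: this rests on the ``strong'' requirement that consecutive layers are joined in an all-to-all fashion, so that every $v\in V'$ sees the same predecessor set $\tFI(H)$ and its activation depends only on its own label and the scalar $\ell$.
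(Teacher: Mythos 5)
Your proposal is correct in substance and follows essentially the same route as the paper: the same two-parameter table indexed by the size of the spread and the number of activated interface vertices $\tFI(H)$, the same convolution at disjoint-union nodes, and the same use of the statistic $|\{v\in V'\mid f(v)\leq \ell\}|$ at layer-extension nodes, justified exactly as you say by the all-to-all connection between consecutive layers. The one place you fall short of the paper's proof is the base case: the lemma is stated for influence games, whose labeling function is only required to be non-negative, and the paper explicitly handles isolated vertices with $f(i)=0$ (which are activated for free, so $F(Y)=Y\cup\{i\mid f(i)=0\}$ and the count picks up a factor $2^{\gamma}$ with a shifted binomial, $\gamma$ being the number of zero-label vertices). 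Your formula $T_{I_a}[k,\ell]=\binom{a}{k}$ assumes $F(Y)=Y$ and is wrong when $\gamma>0$; this matters downstream because the reduced graph $R(G,f,i)$ used to compute the satisfaction measure decrements labels and can create zero-label vertices. The fix is routine, but it should be stated.
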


\begin{proof}
Let $n=|V(G)|$, for $0\leq b\leq a\leq n$ and $0\leq b\leq|\tFI(G)|$, consider the following values $T(a,b)$:
$$T(a,b)=|\{X\subseteq N\mid |F(X)|=a \text{ and } |F(X)\cap \tFI(G)|=b\}|.$$
Observe that, if we can compute in polynomial time an array holding all the $T(a,b)$ values, then we can obtain $|F_k(N)|$ in polynomial time as
$$|F_k(N)| = 2^{n-|N|}\sum_{0\leq b\leq |\tFI(G)|}T(k,b).$$

Let us show, by induction on the structure of the graph $G$, how  an array storing the desired values can be obtained from the corresponding arrays for adequate subgraphs.
According to  Definition~\ref{def:SHIM}, the base case are sets of isolated vertices.

\smallskip
\noindent
\textbf{Base case:}  $H=I_\alpha$.

In this case  all the actors are independent, so $|\tFI(H)|=\alpha$ and, for $X\subseteq V$, $F(X)=X\cup \{i\in V\mid f(i)=0\}$. Therefore, for $0\leq b\leq a\leq \alpha$,
we have 
$$T(a,b)=
\begin{cases}
\binom{\alpha-\gamma}{b-\gamma}\ 2^\gamma & \text{if  $a=b > \gamma$}\\
0 & \text{otherwise}
\end{cases}$$
where $\gamma=|\{i\in V\mid f(i)=0\}|$.
All those values can be computed in polynomial time.

The inductive step of the proof is divided into two cases.

\smallskip
\noindent
\textbf{Case 1:}  $H= H_1 + H_2$,  for some strong hierarchical graphs $H_1$, $H_2$, recall that we also have that  $V(H_1)\cap V(H_2) = \emptyset$.

Let $T_1, T_2$ be the tables corresponding to the graphs $H_1$ and $H_2$, respectively, and let $n_1=|V(H_1)|$ and $n_2=|V(H_2)|$. As the graphs are disjoint, we have the following expression, for $0\leq b \leq a \leq n$ and $0\leq b\leq |\tFI(H)|$:
\[
T(a,b) = \hskip -20mm
\sum_{\begin{array}{c}
a_1+a_2=a\\
b_1+b_2=b\\
0\leq b_i \leq a_i \leq n_i \text{ and } 0\leq b_i \leq |\tFI(H_i)|, 1\leq i\leq 2
\end{array}
} \hskip -20mm T_1(a_1,b_1)  T_2(a_2,b_2).
\]
Those values can be computed in polynomial time using a multiple  scanning as follows:

\begin{tabbing}
123456789\=1234\=1234\=\kill
\> Initialize $T(a,b)$ to 0\\
\> for $a_1 = 0$ to $n_1$, $a_2 = 0$ to $n_2$,  $b_1 = 0$ to $a_1$,  $b_2 = 0$ to $a_2$\\
\>\>\> $T(a_1+a_2, b_1+b_2) = T(a_1+a_2, b_1+b_2) + T_1(a_1, b_1) T_2(a_2,b_2)$.
\end{tabbing}

\smallskip
\noindent
\textbf{Case 2:} $H = H' \otimes V'$  for some strong hierarchical graph $H\rq{}$ and a set $V'\neq \emptyset$ with $V(H')\cap V'=\emptyset$. 

In such a case $\tF(H)=V'$ and the graph $H'$ is a strong hierarchical graph with one layer less. Let $T'$ be
the array corresponding to the graph $H'$.  Let $n'=|V(H')|$, $\beta'=|\tFI(H')|$ and $\beta=|\tFI(H)|$. Recall that, in the construction of $H$, the arcs added to   $H'$ connect in an all-to-all fashion the vertices in $\tFI(H')$ with the vertices in $\tF(H)$. 

To express the values of $T$ we use an auxiliary table $R(c)$, $0\leq c\leq n$, defined as
$$R(c) =|\{v\in \tF(H) \mid f(v) \leq c\}|.$$
A vector storing the values of $R$ can be computed in polynomial time by sorting the set $\tF(H)$ in increasing order of labels and performing a scanning of the sorted table. 

We get the following expression, for $0\leq b\leq a \leq n$ and $0\leq b\leq \beta$:
\[
T(a,b)= \hskip -8mm
\sum_{\begin{array}{c}
a' + R(b')=a\\
R(b')=b\\
0\leq b'\leq a' \leq n' , 0\leq b' \leq \beta'
\end{array}} \hskip -10mm T'(a',b').
\]
Those values can be computed in polynomial time using a double scanning as follows:
\begin{tabbing}
123456789\=1234\=1234\=\kill
\> Initialize $T(a,b)$ to 0\\
\> for $0\leq b'\leq a' \leq n'$ , $0\leq b' \leq \beta'$\\
\>\>\> $T(a' + R(b'),R(b'))=T(a' + R(b'),R(b')) + T'(a',b')$.
\end{tabbing}

Note that given a graph, it is possible to know whether it is a strong hierarchical graph in polynomial time.
Moreover, given a strong hierarchical graph it is possible to find a decomposition, according to the definition,  as described above in polynomial time in order
to guide the application of the computation rules in polynomial time. 
\qqed\end{proof}

It is easy to see that, for a strong hierarchical influence graph $(G,f)$ and an actor $i\in \tL(G)\cup \tF(G)$, the graph $R(G,f,i)$, constructed as in Lemma~\ref{lem:SAT_for_oblivious2}, is a strong hierarchical influence graph. Therefore, we can use Lemma~\ref{lem:SAT_for_oblivious2} together with the previous algorithm to compute $\SatM(i)$ in polynomial time. Thus, we have the following.

\begin{theorem}\label{theo:SAT-Oblivious}
The \SAT problem, for  oblivious models corresponding to strong hierarchical influence games, is polynomial time solvable.
\end{theorem}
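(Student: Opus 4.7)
The plan is to combine the closed-form expressions for $\SatM$ given by Lemmas~\ref{lem:SAT_for_oblivious} and~\ref{lem:SAT_for_oblivious2} with the polynomial-time evaluation of the expansion counts $|F_k(N,G,f)|$ provided by Lemma~\ref{lem:complete_t2}. Throughout, I exploit the key structural fact, built into Definition~\ref{def:SHIM}, that for a strong hierarchical influence game $(G,f,q,N)$ the set of players is exactly $N=\tL(G)\cup\tI(G)$, so every $i\in N$ satisfies $P_G(i)=\emptyset$. This lets me apply Lemma~\ref{lem:SAT_for_oblivious2} directly (and never Lemma~\ref{lem:SAT_for_oblivious3}).

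First I would split on the actor $i$ whose satisfaction is being computed. If $i\in V(G)\setminus N$, or $i\in N$ with $f(i)=0$, Lemma~\ref{lem:SAT_for_oblivious} immediately gives $\SatM(i)=2^{n-1}$, which is trivially polynomial. Otherwise $i\in N$ and $f(i)>0$, so by Lemma~\ref{lem:SAT_for_oblivious2},
\[
\SatM(i)\;=\;2^{n-1}+2^{n-|N|}\sum_{j=1}^{1+|\tF_i|}\bigl|F_{q-j}(N\setminus\{i\},R(G,f,i))\bigr|.
\]
The term $2^{n-|N|}$ and the bound $1+|\tF_i|\le n$ on the sum are both polynomially bounded in the input, so the only nontrivial task is evaluating each summand.

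To evaluate the sum I would verify that $R(G,f,i)=(G',f')$ is again a strong hierarchical influence graph with player set $\tL(G')\cup\tI(G')=N\setminus\{i\}$, and then invoke Lemma~\ref{lem:complete_t2}. Concretely, the construction of $R$ deletes the actor $i$ together with the vertex set $\tF_i$ of immediate, in-degree-one, unit-label followers of $i$, and decrements by one the labels of the remaining successors of $i$. Because $G$ was assembled from isolated sets using disjoint union and one-layer extensions, removing $i$ from its layer and deleting the followers that depend exclusively on $i$ yields a graph that is still a disjoint union of one-layer extensions of strong hierarchical graphs, and label changes do not affect membership in the family. Once this is established, a single application of Lemma~\ref{lem:complete_t2} to $R(G,f,i)$ computes the full table $T(a,b)$ in polynomial time, from which all the values $|F_{q-j}(N\setminus\{i\},R(G,f,i))|$ for $j=1,\ldots,1+|\tF_i|$ can be read off, summed and combined with the $2^{n-1}$ and $2^{n-|N|}$ factors.

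The main obstacle is the structural claim that $R(G,f,i)$ stays inside the family of strong hierarchical graphs; the inline remark just before the theorem asserts this as routine, and it is proved by induction on the recursive decomposition guaranteed by Definition~\ref{def:SHIM}, showing that deleting $i$ and its exclusive followers commutes with the $+$ and $\otimes V'$ operations. Everything else is routine arithmetic over polynomially many polynomially sized tables.
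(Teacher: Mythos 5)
Your proposal is correct and follows essentially the same route as the paper: apply Lemma~\ref{lem:SAT_for_oblivious} for non-players (and zero-label players), apply Lemma~\ref{lem:SAT_for_oblivious2} for players $i\in N=\tL\cup\tI$ (which indeed have $P_G(i)=\emptyset$), observe that $R(G,f,i)$ is again a strong hierarchical influence graph, and evaluate the resulting expansion counts with the dynamic program of Lemma~\ref{lem:complete_t2}. The paper states the closure of the family under the $R$ construction as "easy to see," exactly the structural claim you flag and sketch by induction on the decomposition, so nothing further is needed.
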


Finally we extend the previous computation to the non-oblivious models. In this case we provide a recursive algorithm that allow us to compute directly the satisfaction measure.

\begin{theorem}\label{theo:SAT-non-Oblivious}
The \SAT problem, for  non-oblivious models corresponding to strong hierarchical influence games, is polynomial time solvable.
\end{theorem}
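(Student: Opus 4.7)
The plan is to extend the dynamic programming used in the proof of Theorem~\ref{theo:SAT-Oblivious}. Instead of counting activated vertices, we count vertices whose final decision is $1$, and we incorporate the free choices of the follower components $x_v$ that appear in the tie case.

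For a strong hierarchical subgraph $H$ in a recursive decomposition of $G$, define
\[
T_H(k, a) = \bigl|\{(Y, y) : Y \subseteq N \cap V(H),\, y \in \{0,1\}^{V(H) \setminus N},\, |\{v \in V(H) : c_v(Y, y) = 1\}| = k,\, |F(Y) \cap \tFI(H)| = a\}\bigr|,
\]
where $c_v(Y,y)$ is the non-oblivious final decision of $v$ given the player activation $Y$ and the free assignment $y$ on the non-players. The structural fact exploited by the DP is that, in an extension $H' \otimes V'$, every $v \in V'$ has the same predecessor set $\tFI(H')$ and hence sees the same number $a'$ of activated predecessors; the status of $v$ (\textsc{force}-1, \textsc{force}-0 or \textsc{tie}) therefore depends only on $a'$, $f(v)$ and $\beta' = |\tFI(H')|$.

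For the base case $H = I_\alpha$ every vertex is a player with $c_v = [v \in Y]$, so $T_H(k, a) = \binom{\alpha}{k}$ when $k = a$ and $0$ otherwise. For $H = H_1 + H_2$,
\[
T_H(k, a) = \sum_{\substack{k_1 + k_2 = k \\ a_1 + a_2 = a}} T_{H_1}(k_1, a_1)\, T_{H_2}(k_2, a_2).
\]
For $H = H' \otimes V'$, precompute the sizes $r_1(a'), r_0(a'), r_T(a')$ of the partition of $V'$ into \textsc{force}-1, \textsc{force}-0 and \textsc{tie} vertices for each $0 \le a' \le \beta'$, together with $R(a') = |\{v \in V' : f(v) \le a'\}|$, the number of vertices of $V' = \tFI(H)$ activated when $a'$ predecessors in $\tFI(H')$ are activated. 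Then
\[
T_H(k, a) = \sum_{a' : R(a') = a}\ \sum_{\tau = 0}^{r_T(a')} T_{H'}(k - r_1(a') - \tau,\, a')\, \binom{r_T(a')}{\tau}\, 2^{r_1(a') + r_0(a')},
\]
where $\tau$ is the number of \textsc{tie} vertices in $V'$ whose free component $y_v$ is set to $1$, and $2^{r_1(a') + r_0(a')}$ counts the arbitrary $y_v$ on \textsc{force} vertices (whose $c_v$ is already fixed by $a'$). All auxiliary quantities are polynomial time computable by sorting $V'$ by its labels, and $T_H$ is filled in polynomial time.

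To compute $\SatM(i)$, run the DP twice with $x_i$ pinned to $b \in \{0, 1\}$, producing tables $T^{i, b}_H$; subgraphs not containing $i$ keep the unmodified $T_H$. If $i$ is a player first appearing in a base component $I_\alpha$, set $T^{i, 1}_{I_\alpha}(k, a) = \binom{\alpha - 1}{k - 1}\delta_{k, a}$ and $T^{i, 0}_{I_\alpha}(k, a) = \binom{\alpha - 1}{k}\delta_{k, a}$. If $i$ is a non-player first appearing in an extension $H' \otimes V'$, the term of the extension sum for each $a'$ splits by the status of $i$ at that $a'$: when $i \in V'_1(a') \cup V'_0(a')$ the factor $2^{r_1(a') + r_0(a')}$ becomes $2^{r_1(a') + r_0(a') - 1}$ (since $x_i$ is no longer free), while when $i \in V'_T(a')$ one has $c_i = x_i = b$ and $\binom{r_T(a')}{\tau}$ is replaced by $\binom{r_T(a') - 1}{\tau - b}$. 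Finally,
\[
\SatM(i) = \sum_{a} \Bigl( \sum_{k \ge q} T^{i, 1}_G(k, a) + \sum_{k < q} T^{i, 0}_G(k, a) \Bigr).
\]

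The main obstacle is the bookkeeping in the extension step when $i \in V'$: the status of $i$ varies with $a'$, and each of the three sub-cases modifies the generic recurrence in a different way, either losing a free factor of $2$ or shrinking and shifting the tie-binomial. A direct check that $T^{i, 0}_H + T^{i, 1}_H = T_H$ confirms the correctness of the case split, after which the remainder of the algorithm is a routine adaptation of Theorem~\ref{theo:SAT-Oblivious}.
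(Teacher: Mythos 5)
Your proposal is correct and follows essentially the same route as the paper's proof: the same dynamic program over the recursive decomposition, with states tracking the number of final-decision ones together with the oblivious spread into the current bottom layer, the same force-1/force-0/tie partition of each new layer (the paper's $A_1,A_2,A_3$), and the same device of carrying two tables with $x_i$ pinned to $0$ and to $1$. The recurrences you give for the base case, disjoint union, and layer extension, including the sub-cases for where the distinguished actor first appears, match the paper's up to reindexing of the tie count.
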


\begin{proof}
Let $\cM=(G,f,q,N)$ be an oblivious influence model.  Assume that $G$ is a strong hierarchical graph.  We first compute a decomposition of $G$ according to the recursive definition. Recall that such a decomposition can be obtained in polynomial time. 

Given a vertex $u$, our algorithm to compute  $\SatM(u)$ computes some tables with partial results, one  for each of the subgraphs in the decomposition of $G$. Finally,  the algorithm  combines the tables corresponding to $G$ to get  $\SatM(u)$. 
Observe, the vertex $u$ is present only in a subset of the graphs appearing in the decomposition of $G$. We have to compute  different kind of information for a subgraph $H$ when $u\in V(H)$ than when $v\notin H$. When $u\in V(H)$ we have to keep track of $u$\rq{}s initial decision. 
  
Let $n=|V(H)|$ and $\cM=\cM^n(H)$. When $u\notin V(H)$, we consider the values $S(a,b)$, for $0\leq a\leq n$ and $0\leq b\leq|\tFI(H)|$, defined as follows:
$$S(a,b)=|\{x\in\{0,1\}^n\mid |\{i\mid c_i^\cM(x) = 1\}|=a \text{ and } |F(X(x)\cap N)\cap \tFI(H)|=b\}|.$$
When $u\in V(H)$, we consider two sets of values  $S_0(a,b)$ and $S_1(a,b)$,  for $0\leq a\leq n$ and $0\leq b\leq|\tFI(H)|$, defined as
\begin{align*}
S_0(a,b) &=|\{x\in\{0,1\}^n\mid x_u=0 \\
& \qquad \text{ and } |\{i\mid c_i^\cM(x) = 1\}|=a\\
& \qquad \text{ and } |F(X(x)\cap N)\cap \tFI(G)|=b\}|,\\
S_1(a,b) &=|\{x\in\{0,1\}^n\mid x_u=1 \\
& \qquad \text{ and } |\{i\mid c_i^\cM(x) = 1\}|=a \\
& \qquad \text{ and } |F(X(x)\cap N)\cap \tFI(G)|=b\}|.
\end{align*}

Observe that, if we can compute in polynomial time arrays holding all the $S_0(a,b)$ and $S_1(a,b)$ values,  for the graph $G$ (which indeed contains $u$),  we can express  $\SatM(u)$ as
$$\SatM(u)=  \sum_{0\leq a < q} \, \sum_{0\leq b\leq \tFI(G)} S_0(a,b) + \sum_{q\leq a\leq n} \, \sum_{0\leq b\leq \tFI(G)} S_1(a,b) .$$
Thus $\SatM(u)$ could be  computed in polynomial time.

Let us show, by induction on the structure of the graph $G$, how  an array storing the desired values can be obtained from the corresponding arrays for adequate subgraphs. Recall that according to the definition the labeling function is strictly positive.
According to  Definition~\ref{def:SHIM}, the base case is a set of isolated vertices.

\smallskip
\noindent
\textbf{Base case:}  $H=I_\alpha$, for some $\alpha>0$.

As all the actors are independent, $|\tFI(H)|=\alpha$ and, for $X\subseteq V$, $F(X)=X$. Furthermore, $c_i^\cM=1$ if and only if  $i\in F(X(x))$.  The expressions are similar to those of the base case for the expansion problem  when $\gamma =0$.
Therefore, for $0\leq a\leq \alpha$ and $0\leq b\leq \alpha$, when $u\notin V(H)$, we have  the expression
$$S(a,b)=
\begin{cases}
\binom{\alpha}{b}& \text{if  $a=b$,} \\
0 & \text{otherwise.}
\end{cases}$$
When $u\in V(H)$, we have to derive expressions for the two cases. Observe that, if $x_u=0$, $u$ does not form part of the initial $X$, so we can select vertices from $V\setminus\{u\}$. If $x_u=1$, $u$ must be part of any $X$, so we have to select one  vertex less. Therefore, we have
$$S_0(a,b)=
\begin{cases}
\binom{\alpha-1}{b}& \text{if  $a=b$,}\\
0 & \text{otherwise,}
\end{cases}$$
and 
$$S_1(a,b)=
\begin{cases}
\binom{\alpha-1}{b-1}& \text{if  $a=b$,}\\
0 & \text{otherwise.}
\end{cases}$$
Note that all those values can be computed in polynomial time.

\smallskip
\noindent
\textbf{Case 1:}  $H=H_1 + H_2$, for some disjoint strong hierarchical influence graphs, therefore   $V(H_1)\cap V(H_2)=\emptyset$.

For $i\in\{1,2\}$, let $n_i=|V(H_i)|$, $\beta_i=\tFI(H_i)$, and let $S^i, S^i_0,  S^i_1$  be the tables corresponding to the graph $H_i$. Finally, set  $\beta=\beta_1+\beta_2$. As the graphs are disjoint,for $0\leq b \leq a \leq n$ and $0\leq b\leq \beta$, when $u\notin V(H)$,  we have the following expression
\[
S(a,b) = \hskip -20mm
\sum_{\begin{array}{c}
a_1+a_2=a\\
b_1+b_2=b\\
0\leq a_i \leq n_i \text{ and } 0\leq b_i \leq \beta_i, 1\leq i\leq 2
\end{array}
} \hskip -20mm S^1(a_1,b_1)  \, S^2(a_2,b_2).
\]
When $u\in V(H)$ we assume w.l.o.g. that $u\in V(G_1)$ and we have
\[
S_0(a,b) = \hskip -20mm
\sum_{\begin{array}{c}
a_1+a_2=a\\
b_1+b_2=b\\
0\leq a_i \leq n_i \text{ and } 0\leq b_i \leq \beta_i, 1\leq i\leq 2
\end{array}
} \hskip -20mm S_0^1(a_1,b_1) \,  S^2(a_2,b_2) \]
and 
\[S_1(a,b) = \hskip -20mm
\sum_{\begin{array}{c}
a_1+a_2=a\\
b_1+b_2=b\\
0\leq a_i \leq n_i \text{ and } 0\leq b_i \leq \beta_i, 1\leq i\leq 2
\end{array}
} \hskip -20mm S_1^1(a_1,b_1)  \, S^2(a_2,b_2).
\]

Those values can be computed in polynomial time using a multiple  scanning similar to the one used in  the proof of Lemma~\ref{lem:complete_t2}.

\smallskip
\noindent
\textbf{Case 2:} $H = H' \otimes V'$, for some  strong hierarchical graph $H$ and a set $V'\neq\emptyset$ and $V'\cap V(H')=\emptyset$. 

In such a case $\tF(G)= V'$. Let $S'$, $S_1'$ and $S_2'$ be
the arrays corresponding to $H'$.  Let $n'=|V(H')|$, $\alpha=|\tFI(H')|$ and $\beta=|\tFI(H)|$. 
Recall that, in the construction of $G$, the arcs added to $G'$ connect in an all-to-all fashion the vertices in $\tFI(H')$ with the vertices in $\tF(H)$. 

To express the values of  $S$, $S_0$ and $S_1$ we use, as before,  an auxiliary table $R(c)$, $0\leq c\leq \alpha$, defined as
$$R(c) =|\{v\in \tF(G) \mid f(v) \leq c\}|$$
which can be computed in polynomial time. Note that $R(c)$ accounts for the number of actors in the added layer when $c$ followers of $G'$ are influenced.  

We need also information for other relevant sets.  For $0\leq c\leq \alpha$, define
\begin{align*}
A_1(c)&= \{v\in \tF(H) \mid f(v) \leq c \text{ and } \alpha-c < f(v)\},\\
A_2(c)&= \{v\in \tF(H) \mid f(v) \leq \alpha - c \text{ and } c < f(v)\},\\
A_3(c) & = \tF(H) - A_1(c) - A_2(c).
\end{align*}
Finally set $R_1(c)=|A_1(c)|$, $R_2(c)=|A_2(c)|$ and $R_3(c)=|A_3(c)|$.
All those sets and values can be precomputed, for any possible value of $c$, in polynomial time.

As the connection to the final layer is complete and $f$ is positive, for a set $X\subseteq\tFI(H')$,  we have  $F(X)=X \cup\{u\in\tF(H)\mid|X|\geq f(u)\}$.
Using this information, we know that a subset of opinion leaders $X\subseteq\tL(H)$  with $\gamma=|F(X)\cap \tFI(H')|$ will expand its influence to all the followers $i$ for which $f(i)\leq \gamma$.

For an initial decision vector $x\in\{0,1\}^{n}$, let $\gamma(x)=|F(X(x)\cap \tL(H))\cap \tFI(H')|$.
Observe that the associated final decision vector,  for $i\in\tF(G)$, can be expressed as
\[c_i^{\cM}(x) = 
\begin{cases}
 1  & \mbox{if } \gamma(x) \geq f(i) \text{ and } \alpha-\gamma(x) < f(i)\\
 0  & \mbox{if } \alpha-\gamma(x) \geq f(i) \text{ and } \gamma(x) < f(i)\\
x_i & \mbox{otherwise.}\\
\end{cases}
\, = \, 
\begin{cases}
 1  & \mbox{if } i\in A_1(\gamma(x)),\\
 0  & \mbox{if } i\in A_2(\gamma(x)),\\
x_i & \mbox{if } i\in A_3(\gamma(x)).\\
\end{cases}
\]
Taking into account the last expression, we can count those initial decision vectors giving raise to the prescribed number of 1's in the final decision vector.  We get the following expressions.

\smallskip
When $u\notin V(H)$, we have that, for $0\leq a\leq n$ and $0\leq b\leq \beta$, 
$$S(a,b)= \hspace{-1cm} \sum_{\begin{array}{c}
a'+  R_1(b') + \delta=a\\
R(b')=b\\
0\leq a'\leq n', \, 
0\leq b '\leq \alpha\\
0\leq \delta\leq R_3(b')
\end{array}} \hspace{-1cm} S'(a',b') \,   \binom{R_3(b')}{\delta} \, 2^{R_1(b')+R_2(b')}.$$

When $u\in V(H)$ we have two cases.

If $u\in V(H')$,  for $0\leq a\leq n$ and $0\leq b\leq \beta$,
$$S_0(a,b)= \hspace{-1cm}\sum_{\begin{array}{c}
a'+  R_1(b') + \delta=a\\
R(b')=b\\
0\leq a'\leq n', \, 
0\leq b'\leq \alpha\\
0\leq \delta\leq R_3(b')
\end{array}}\hspace{-1cm}S_0'(a',b') \,  \binom{R_3(b')}{\delta} \, 2^{R_1(b')+R_2(b')}$$
and
$$S_1(a,b)= \hspace{-1cm}\sum_{\begin{array}{c}
a'+  R_1(b') + \delta=a\\
R(b')=b\\
0\leq a'\leq n', \, 
0\leq b'\leq \alpha\\
0\leq \delta\leq R_3(b')
\end{array}}\hspace{-1cm}S_1'(a',b') \,  \binom{R_3(b')}{\delta} \, 2^{R_1(b')+R_2(b')}.$$

If $u\in V'=\tF(H)$, for $0\leq a\leq n$ and $0\leq b\leq \beta$, 
\begin{align*}
S_0(a,b) &= \hspace{-1cm}\sum_{\begin{array}{c}
a'+  R_1(b') + \delta=a\\
R(b')=b\\
0 \leq a'\leq n', \,
0\leq b'\leq \alpha\\
0\leq \delta\leq R_3(b')\\
u\notin A_3(b')
\end{array}} \hspace{-1cm} S'(a',b') \,  \binom{R_3(b')}{\delta} \, 2^{R_1(b')+R_2(b')-1}\\
& \qquad  + 
\hspace{-1cm} \sum_{\begin{array}{c}
a'+  R_1(b') + \delta =a\\
R(b')=b\\
0\leq a'\leq n', \, 
0\leq b'\leq \alpha\\
0\leq \delta<  R_3(b')\\
u\in A_3(b')
\end{array}} \hspace{-1cm} S'(a',b')  \,   \binom{R_3(b')-1}{\delta} \, 2^{R_1(b')+R_2(b')}
\end{align*}
and
\begin{align*} S_1(a,b) &= \hspace{-1cm}\sum_{\begin{array}{c}
a'+   R_1(b') + \delta=a\\
R(b')=b\\
0\leq a'\leq n',\, 
0\leq b'\leq \alpha\\
0\leq \delta\leq R_3(b')\\
u\notin A_3(b')
\end{array}}  \hspace{-1cm} S'(a',b')  \,    \binom{R_3(b')}{\delta} \, 2^{R_1(b')+R_2(b')-1} \\
& \qquad + 
\hspace{-1cm}\sum_{\begin{array}{c}
a'+  R_1(b') + \delta +1 =a\\
R(b')=b\\
0\leq a'\leq n', \, 
0\leq b'\leq \alpha\\
0\leq \delta<  R_3(b')\\
u\in A_3(b')
\end{array}}  \hspace{-1cm}S'(a',b')  \,   \binom{R_3(b')-1}{\delta} \, 2^{R_1(b')+R_2(b')}
\end{align*}

All the required values can be computed in polynomial time using a double scanning similar to the one used in  the proof of Lemma~\ref{lem:complete_t2}.

Thus, the claim holds.
\qqed\end{proof}

\section{Star influence models}
\label{sec:Star-Inf}
Now we  consider a family of influence graphs with a star-topology which was previously studied in the context of influence games~\cite{MRS13}.
In such a graph the two layered bipartite topology is restricted to be a star graph and extended by allowing some bidirectional connections to the center of the star. 
In a star influence graph, in addition to the sets $\tL$, $\tI$ and $\tF$, we have the central node  $c$ wich acts as mediator  and the set $\tR$ of {\em reciprocal} actors.

\begin{definition}
A {\em star influence graph} is an influence graph $(G,f)$, where $V(G)=\tL\cup\tI\cup\tR\cup\{c\}\cup\tF$ and
$E(G)=\{(u,c)\mid u\in\tL\cup\tR\} \cup \{(c,v)\mid v\in\tR\cup\tF\}$.
A {\em star influence game} is a game $\Gamma= (G,f,q,N)$, where $N=\tL\cup\tR\cup\tI$ and $(G,f)$ is a star influence graph.
\end{definition}
Without loss of generality we can assume that  the labeling function of a star influence game satisfies $f(i)\in\{0,1\}$, for $i\in V(G)\setminus\{c\}$.
Note that a reciprocal actor with a label greater than~$1$ could never be influenced by the mediator, so we remove the arcs from the center to the actor which becomes an additional opinion leader in the new graph. 
We can also assume that $f(c)\in\{0,1,\ldots,|\tL|+|\tR|+1\}$.

\begin{lemma}
\label{lem:stars}
Let $(G,f,q,N)$ be a star influence game. For $1\leq k\leq n$, the values
$|F_k(N,G,f)|$ can be computed in polynomial time.
\end{lemma}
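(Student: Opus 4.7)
The plan is to reduce the count to a polynomial-size sum of products of binomial coefficients, exploiting the very shallow activation process in a star graph.

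First, since the activation process depends only on $X \cap N$, every $X \subseteq V(G)$ projects to a unique $Y = X \cap N \subseteq N$ and can be extended in exactly $2^{|V(G)\setminus N|} = 2^{|\tF|+1}$ ways. So it suffices to count, for each $k$, the sets $Y \subseteq N = \tL \cup \tR \cup \tI$ with $|F(Y)| = k$ and then multiply by $2^{|\tF|+1}$.

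Second, I would trace the activation process explicitly. Using the WLOG assumption that $f(v) \in \{0,1\}$ for every $v \neq c$, split each of $\tL, \tR, \tI, \tF$ into its label-$0$ part (superscript $0$) and its label-$1$ part (superscript $+$). At step $1$ every label-$0$ vertex becomes active regardless of $Y$, so after step $1$ the set of active predecessors of $c$ is exactly $(Y \cap (\tL^+ \cup \tR^+)) \cup \tL^0 \cup \tR^0$. Hence $c \in F(Y)$ iff $f(c) = 0$ or $\ell^+ + r^+ + |\tL^0| + |\tR^0| \geq f(c)$, where $\ell^+ = |Y \cap \tL^+|$ and $r^+ = |Y \cap \tR^+|$. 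If $c$ is activated then one further step activates all of $\tR^+ \cup \tF^+$, and the process stops; if $c$ is not activated, nothing further is added.

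Third, a direct bookkeeping on the sets $\tL \cap F(Y)$, $\tR \cap F(Y)$, $\tI \cap F(Y)$, $\{c\} \cap F(Y)$, $\tF \cap F(Y)$ yields two closed formulas for $|F(Y)|$ that depend on $Y$ only through the triple $(\ell^+, r^+, i^+)$ with $i^+ = |Y \cap \tI^+|$:
\begin{equation*}
|F(Y)| =
\begin{cases}
\ell^+ + i^+ + |\tL^0| + |\tI^0| + |\tR| + |\tF| + 1, & c \text{ activated,}\\
\ell^+ + r^+ + i^+ + |\tL^0| + |\tR^0| + |\tI^0| + |\tF^0|, & c \text{ not activated.}
\end{cases}
\end{equation*}

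Finally, for a fixed triple $(\ell^+,r^+,i^+)$ the number of $Y \subseteq N$ realising it is
\begin{equation*}
\binom{|\tL^+|}{\ell^+}\binom{|\tR^+|}{r^+}\binom{|\tI^+|}{i^+}\cdot 2^{|\tL^0|+|\tR^0|+|\tI^0|},
\end{equation*}
because the label-$0$ vertices of $N$ can be included or excluded freely without affecting $F(Y)$. Iterating over the $O(n^3)$ triples, determining the case for $c$ and the value of $|F(Y)|$ in constant time per triple, and accumulating the above weights into a table indexed by $k$ produces all the values $|F_k(N,G,f)|$ in polynomial time. I do not anticipate a real obstacle: the main care is simply to keep the label-$0$ and label-$1$ parts separated so that the parameter $(\ell^+, r^+)$ (and not $|Y \cap (\tL \cup \tR)|$) drives the activation of $c$, and to avoid double-counting vertices that lie both in $Y$ and in the label-$0$ always-active set.
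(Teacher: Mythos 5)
Your argument is correct and follows essentially the same route as the paper's proof: classify initial sets by how many label-one vertices of each class they contain, decide from that whether the centre $c$ activates, and count with products of binomial coefficients times a power of two for the irrelevant (label-zero and non-player) vertices. Your single uniform sum over triples $(\ell^+,r^+,i^+)$ simply packages in one formula what the paper splits into cases ($f(c)=0$, all labels positive, mixed), so the content is the same.
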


\begin{proof}

We consider three cases and provide either a  closed formula  or a recursion allowing to compute  $|F_k(N,G,f)|$ in polynomial time.  Let $\tRF=\tR\cup \tF$.


\smallskip
\noindent 
\textbf{Case 1: }  $f(i)>0$, for $i\in V(G)$ .

When  $k<f(c)$ and $X\in F_k(N)$, we know that $c\notin F(X\cap N)$.  Thus, $F_k(N)$ only contains those sets $X$ with  $|X\cap N|=k$.
When $k\geq f(c)$,  $F_k(N)$ can be divided into two subsets: those with $c\notin F(X\cap N)$ and those with $c\in F(X\cap N)$.
If $c\notin F(X\cap N)$,  we know that $F(X)=X$ and that $|X\cap(\tL\cup\tR)|<f(c)$. To fulfill this condition it is enough  to take $0\leq i\leq f(c)-1$ vertices from $\tL\cup\tR$ and the remaining $k-i$ vertices from $\tI$ .
If $c\in F(X\cap N)$, we know that $|X\cap(\tL\cup\tR)|\geq f(c)$ and, therefore, $\tR \cup \tF \cup \{c\} \subseteq F(X\cap N)$. To attain those restrictions, together with  $F(X\cap N)=k$, we have to take $i$ vertices from $\tL$, $j$ vertices from $\tR$, for some values of $i,j$ verifying  $i+j\geq f(c)$  and $k-i-(|\tRF| +1)\geq 0$, and complete the set with $k-i-(|\tRF| +1)$ elements from $\tI$. Putting all together, we have the following expressions.\\
When  $k<f(c)$, 
\[
\frac{|F_k(N)|}{2^{|\tF|+1}}= \binom{|\tL|+|\tR|+|\tI|}{k}
\]
When  $k\geq f(c)$,
\[
\frac{|F_k(N)|}{2^{|\tF|+1}} =\sum_{i=0}^{f(c)-1}\binom{|\tL|+|\tR|}{i}\binom{|\tI|}{k-i} +
\hskip -4mm
 \sum_{\substack{i+j\geq f(c)\\
 0\leq i\leq |\tL| \ , 0\leq  j\leq |\tR|\\
  k-i- (|\tRF| +1)\geq 0}}
\hskip -4mm\binom{|\tL|}{i}\binom{|\tR|}{j}\binom{|\tI|}{k-i-(|\tRF|+1)}
\]

\smallskip
\noindent 
\textbf{Case 2: }   $f(c)=0$.

Observe that in this case, for $X\subseteq V$, $F(X\cap N)=(X\cap (\tL\cup \tI)) \cup \tR \cup \tF\cup \{c\}$. Thus either  $k<|\tRF|+1$ or $k\geq |\tRF|+1$ .
In the first case, there are no sets expanding to $k$ vertices. In the second,   we have  to select the adequate number of vertices from $\tL\cup \tI$ and any number from $\tR$. This leads to the following expression:
\[
\frac{|F_k(N)|}{2^{|\tF|+1}} = 
\begin{dcases}
 0                                                 & \mbox{ if } k<|\tRF|+1\\
 \binom{|\tL|+|\tI|}{k-(|\tRF|+1)}\ 2^{|\tR|} & \mbox{ if } k\geq |\tRF|+1.
\end{dcases}
\]

\smallskip
\noindent 
\textbf{Case 3: }  $f(c)>0$ and, for some $i\in V(G)\setminus\{c\}$, $f(i)=0$.

Let $Z_1=\{i\in\tI\cup\tF\mid f(i)=0\}$, $Z_2=\{i\in\tL\cup\tR\mid f(i)=0\}$, $z_1=|Z_1|$ and $z_2=|Z_2|$.

Let $X\subseteq V$. Observe that, for  $i\in Z_1\cup Z_2$, $i\in F(X\cap N)$.  Therefore, we can remove those vertices from the influence graph taking care of reducing the label of  $c$ whenever a vertex from $Z_2$ is removed and reducing the size of the required expansion.  So, we construct the influence graph $(G',f')$ where
$G'=G\setminus (Z_1\cup Z_2)$ and $f'(u)=f(u)$, for $u\neq c$, and $f'(c)=\max\{f(c)-z_2, 0\}$. We have 
$$|F_k(N,G,f)|=|F_{k-z_1-z_2}(N\setminus (Z_1 \cup Z_2),G',f')| 2^{z_1+z_2}.$$ 
As, only $f'(c)$ can be zero, the later expression can be computed using the formulas provided in the previous cases and the claim follows.  
\qqed\end{proof}

It is known that the problem of counting the number of winning or losing coalitions of a given influence game is \#P-complete~\cite{MRS15}. The expressions provided in Lemma~\ref{lem:stars} allow us  to count the number of winning and losing coalitions, for  star influence games, in polynomial time, by computing $\sum_{i=q}^n|F_i(N)|$ and $\sum_{i=0}^{q-1}|F_i(N)|$, respectively.

\begin{example}
Consider the star influence graph $(G,f)$ of Figure~\ref{fig:star} and the star influence model $(G,f,4,N)$. Here $|\tI|=|\tF|=1$, $|\tR|=2$, $|\tL|=3$ and $f(c)=3$.
Hence, $|\cW|=\sum_{i=4}^8|F_i(N)|=4 (0+3+12+13+4)=128$, and $|\cL|=\sum_{i=0}^3|F_i(N)|=4 (1+6+15+10)=128$. Note that it holds that $|\cW|+|\cL|=256=2^8$, as expected.
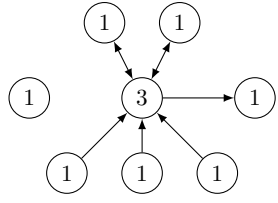
\begin{figure}[t]
\centering
\begin{tikzpicture}[every node/.style={circle,scale=0.8}, >=latex]
\node[draw](a) at (1.0,2.0)[] {1};
\node[draw](b) at (2.0,2.0)[] {1};
\node[draw](c) at (0.0,1.0)[] {1};
\node[draw](d) at (1.5,1.0)[] {3};
\node[draw](e) at (3.0,1.0)[] {1};
\node[draw](f) at (0.5,0.0)[] {1};
\node[draw](g) at (1.5,0.0)[] {1};
\node[draw](h) at (2.5,0.0)[] {1};
\draw[<->](a) to node {}(d);
\draw[<->](b) to node {}(d);
\draw[->] (d) to node {}(e);
\draw[->] (f) to node {}(d);
\draw[->] (g) to node {}(d);
\draw[->] (h) to node {}(d);
\end{tikzpicture}
\caption{A star influence graph.\label{fig:star}}
\end{figure}
\end{example}

In order to solve the \SAT problem for star influence models we first show that the \EXPN problem can be solved in polynomial time in a slightly extended family of influence graphs.
 
\begin{definition}
An {\em extended star influence graph} is an influence graph that is obtained from a star influence graph $(G',f')$, by selecting one vertex $u\in \tR(G')$ and adding a set of vertices $F_u$ with label 1 and the set of edges $\{(u,v)\mid v\in F_u\}$.
An {\em extended star influence model} is an influence model $(G,f,q,N)$, where $(G,f)$ is an extended star influence graph and $N=\tL\cup\tR\cup\tI$.
\end{definition}

Observe that all the additional vertices attached to a reciprocal actor are followers. By taking $F_u=\emptyset$ we obtain a star influence graph.

\begin{lemma}
Let $(G,f)$ be an extended star influence graph and let $N=\tL(G)\cup\tI(G)\cup\tR(G)$.
For $1\le{}k\le{}N$, $|F_k(N,G,f)|$ can be computed in polynomial time.
\end{lemma}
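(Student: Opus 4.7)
The plan is to reduce the extended star problem to a refined version of the counting already carried out in Lemma~\ref{lem:stars} for ordinary star influence graphs. Let $(G',f')$ be the star influence graph from which $(G,f)$ is built, so $V(G)=V(G')\cup F_u$ and $N=N'=\tL(G')\cup\tI(G')\cup\tR(G')$. A set $X\subseteq V(G)$ decomposes uniquely as $X=X'\cup X''$ with $X'\subseteq V(G')$ and $X''\subseteq F_u$. Since every vertex of $F_u$ is a follower (not in $N$), we have $X\cap N=X'\cap N'$, so $X''$ is irrelevant to the activation set; it only contributes the factor $2^{|F_u|}$. On the other hand, comparing the spread of influence in $G$ versus in $G'$, the only difference is that when $u\in F(X'\cap N')$ the whole of $F_u$ gets activated (each vertex has label~$1$ and only $u$ as predecessor), while if $u\notin F(X'\cap N')$ none of $F_u$ is activated.

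Guided by this, I would define, for each $k'$,
\[
\alpha_{k'}=|\{X'\subseteq V(G')\mid |F(X'\cap N')|=k',\ u\in F(X'\cap N')\}|,
\]
\[
\beta_{k'}=|\{X'\subseteq V(G')\mid |F(X'\cap N')|=k',\ u\notin F(X'\cap N')\}|,
\]
so that $\alpha_{k'}+\beta_{k'}=|F_{k'}(N',G',f')|$ is already computable in polynomial time by Lemma~\ref{lem:stars}. A direct case analysis then gives
\[
|F_k(N,G,f)|=2^{|F_u|}\bigl(\alpha_{k-|F_u|}+\beta_k\bigr),
\]
so it suffices to compute $\alpha_{k'}$ (and hence $\beta_{k'}$) in polynomial time for every $k'$.

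The main work is therefore to refine the closed-form expressions from Lemma~\ref{lem:stars} by tracking membership of $u$ in $F(X'\cap N')$. Recall that, for $i\ne c$, $f(i)\in\{0,1\}$. If $f(u)=0$, then $u\in F(X'\cap N')$ always and $\beta_{k'}=0$. If $f(u)=1$, then $u\in F(X'\cap N')$ iff either $u\in X'$ or $c\in F(X'\cap N')$. Following the three cases of Lemma~\ref{lem:stars} (all labels positive; $f(c)=0$; $f(c)>0$ with some zero label off-center), I would rewrite each binomial sum by separating the sub-case $c\in F(X'\cap N')$ (which automatically forces $u\in F(X'\cap N')$) from the sub-case $c\notin F(X'\cap N')$ (where activation of $u$ is equivalent to $u\in X'$, and $u$ is treated as a distinguished vertex inside whichever of $\tL$, $\tI$, or $\tR$ it lies). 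Each resulting expression is still a polynomial-size sum of binomial coefficients, so $\alpha_{k'}$ and $\beta_{k'}$ remain computable in polynomial time.

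The only mildly delicate step is this bookkeeping: one must carry the ``$u$ is singled out'' restriction through the three cases and, in Case~3 of Lemma~\ref{lem:stars}, through the reduction that absorbs the zero-labeled vertices into $(G',f')$; special attention is needed when $u$ itself has $f(u)=0$ or lies in the set of vertices collapsed during that reduction. None of these modifications alters the polynomial-time character of the formulas, so combining them with the identity $|F_k(N,G,f)|=2^{|F_u|}(\alpha_{k-|F_u|}+\beta_k)$ yields the claimed polynomial-time algorithm.
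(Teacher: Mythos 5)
Your proof is correct and follows essentially the same route as the paper's: both rest on the observation that the $|F_u|$ pendant followers contribute a free factor $2^{|F_u|}$ and shift the expansion size by $|F_u|$ exactly when $u$ is activated, which reduces the count to the same case analysis (whether $c$ is activated, and whether $u\in X$) already used for ordinary star graphs in Lemma~\ref{lem:stars}. The only difference is presentational: you package this as the identity $|F_k(N,G,f)|=2^{|F_u|}(\alpha_{k-|F_u|}+\beta_k)$ with a refined count on the base star graph, whereas the paper writes the resulting binomial formulas directly for the extended graph; the combinatorial content is identical.
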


\begin{proof}
Let $(G,f)$ be an extended star influence graph.
Let $u$ be the selected vertex in $\tR$. Assume that, for $i\in V(G)$,  $f(i)>0$.
Set $\tLR= \tL\cup \tR$ and $\tRF=\tR\cup \tF$.

If $k<f(c)$ and $X\in F_k(N)$,  we know that $c\notin F(X\cap N)$. Therefore, 
$F_k(N)$ only contains those sets $X$ with  $|X\cap N|=k$.
If $u\notin X$, then $F(X\cap N)=X\cap N$; but if $u\in X$, then $F(X\cap N)=(X\cap N)\cup\tF_u$.

On the other hand, if $k\geq f(c)$,  the set $F_k(N)$ can be divided into two subsets, those with $c\notin F(X\cap N)$ and those with $c\in F(X\cap N)$.
As in the proof of Lemma~\ref{lem:stars}, when $c\notin F(X\cap N)$,  we need $|X\cap (\tL\cup \tR)|<f(c)$;
but when $c\in F(X\cap N)$, $\tR\cup\tF\cup\tF_u\subseteq F(X\cap N)$. Therefore, we have the following expressions.\\
When  $k<f(c)$,
\[
\frac{|F_k(N)|}{2^{|\tF|+|\tF_u|+1}} =
\binom{|\tLR|-1+|\tI|}{k} + \binom{|\tLR|-1+|\tI|}{k-|\tF_u|-1}.
\]
When  $k\geq f(c)$,
\begin{align*}
\frac{|F_k(N)|}{2^{|\tF|+|\tF_u|+1}} & = \sum_{i=0}^{f(c)-1}\left[\binom{|\tLR|-1}{i}\binom{|\tI|}{k-i} 
+  \binom{|\tLR|-1}{i}\binom{|\tI|}{k-i -|\tF_u|}\right]\\
& \qquad + \hskip -4mm
 \sum_{\substack{i+j\geq f(c)\\
 0\leq i\leq |\tL| \ , 0\leq  j\leq |\tR|\\
  k-i- (|\tRF| +1)\geq 0}}\binom{|\tL|}{i}\binom{|\tR|}{j}\binom{|\tI|}{k-i-(|\tR|+|\tF|+|\tF_u|+1)}. 
\end{align*}

Using this expression, $|F_k(N)|$ can be computed in polynomial time.
An argument  similar to the one in the proof of Lemma~\ref{lem:stars} allows us to 
devise a polynomial time algorithm to compute $|F_k(N)|$ when some of the labels are zero.
\qqed\end{proof}

We  transfer the previous results to an algorithm for solving the \SAT problem for the corresponding oblivious models. Note that, for a given star influence model $(G,f,q,N)$, the graphs $R(G,f,i)$ and $R_2(G,f,i)$, as required in Lemmas~\ref{lem:SAT_for_oblivious2} and~\ref{lem:SAT_for_oblivious3}, are extended star influence graphs. As a consequence of those results  we have the following.

\begin{theorem}\label{teo:sat-star-o}
The \SAT problem, for  oblivious models  corresponding to star influence games, is polynomial time solvable.
\end{theorem}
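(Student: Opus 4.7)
The plan is to deduce the theorem from the formulas established for $\SatM$ in Lemmas~\ref{lem:SAT_for_oblivious}, \ref{lem:SAT_for_oblivious2} and~\ref{lem:SAT_for_oblivious3}, combined with the polynomial time algorithm for \EXPN on extended star influence graphs proved just above. The overall strategy is: given an instance $(\Gamma,i)$ of \SAT with $\Gamma=(G,f,q,N)$ a star influence game, perform a case analysis on which part of the graph vertex $i$ lies in, select the right $\SatM$ formula, verify that the auxiliary influence graph it invokes is still an extended star influence graph, and then invoke the preceding lemma to evaluate the required $|F_k|$ values in polynomial time.

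First I would split the cases according to the partition of $V(G)$. If $i\in\tF\cup\{c\}$, then $i\notin N$ and Lemma~\ref{lem:SAT_for_oblivious} gives $\SatM(i)=2^{n-1}$ directly. If $i\in N$ but $f(i)=0$, the same lemma applies. Otherwise $f(i)>0$ and either $i\in\tL\cup\tI$, in which case $P_G(i)=\emptyset$ and Lemma~\ref{lem:SAT_for_oblivious2} applies with the reduced graph $R(G,f,i)$, or $i\in\tR$, in which case $P_G(i)=\{c\}\neq\emptyset$ and Lemma~\ref{lem:SAT_for_oblivious3} applies with the graph $R_2(G,f,i)$. In each of the nontrivial cases the formula expresses $\SatM(i)$ as a polynomially bounded sum of terms of the form $|F_{q-j}(N\setminus\{i\},R(G,f,i))|$ or $|F_{q-j}(N\setminus\{i\},R_2(G,f,i))|$ for $1\le j\le 1+|\tF_i|$, so it suffices to evaluate each such term in polynomial time.

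The key step, and the one I expect to be the only point needing real care, is to verify that $R(G,f,i)$ and $R_2(G,f,i)$ are indeed extended star influence graphs so that the preceding lemma applies. For $i\in\tL\cup\tI$ with out-degree zero (i.e.\ $i\in\tI$, or $i\in\tL$ but $\tF_i=\emptyset$), $R(G,f,i)$ is just $G$ with $i$ and possibly a few isolated followers removed and with $f(c)$ decremented if $i\in\tL$; the result is clearly a star influence graph with possibly smaller $\tL$ and shifted quota, and thus trivially an extended star influence graph (take $F_u=\emptyset$). For $i\in\tR$, $R_2(G,f,i)$ replaces $\tF_i$ by a set of $2n$ new pendant followers hanging off $i$, decrements labels along $S_G(i)$, and keeps the rest intact; reading the definition of an extended star influence graph, this is exactly the construction that takes the underlying star graph (with $i$ still a reciprocal actor) and attaches a fresh set $F_i$ of unit-labeled followers to $i$. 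So the hypothesis of the previous lemma is met.

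Once the graph is recognized as an extended star influence graph, the preceding lemma gives $|F_k(N\setminus\{i\},\cdot)|$ in polynomial time for every required $k$; since the outer sum in Lemma~\ref{lem:SAT_for_oblivious2}/\ref{lem:SAT_for_oblivious3} has at most $1+|\tF_i|\le n$ terms and the remaining arithmetic involves only products of integers of polynomial bit-length (binomial coefficients and powers of $2$ with exponents bounded by $n$), the total running time is polynomial in $n$. Combining the cases yields a polynomial time algorithm for \SAT on oblivious models of star influence games, establishing Theorem~\ref{teo:sat-star-o}.
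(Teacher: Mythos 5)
Your proposal is correct and follows essentially the same route as the paper: case analysis on whether $i$ lies in $\tF\cup\{c\}$, in $\tL\cup\tI$, or in $\tR$, application of Lemmas~\ref{lem:SAT_for_oblivious}, \ref{lem:SAT_for_oblivious2} and~\ref{lem:SAT_for_oblivious3} respectively, and the observation that $R(G,f,i)$ and $R_2(G,f,i)$ remain (extended) star influence graphs so that the preceding \EXPN lemma applies. Your write-up is in fact more explicit than the paper's two-sentence justification.
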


Finally, we show  how to solve the \SAT problem in non-oblivious models.

\begin{theorem}
The \SAT problem, for  non-oblivious  models  corresponding to star influence games, is polynomial time solvable.
\end{theorem}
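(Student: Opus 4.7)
The plan is to follow the direct-enumeration strategy used in Theorem~\ref{theo:SAT-non-Oblivious}, exploiting the fact that in a star influence game $\Gamma=(G,f,q,N)$ with $N=\tL\cup\tR\cup\tI$ and mediator $c$, the final decision vector $c^\cM(x)$ depends on $x$ through only a constant number of aggregated parameters. Concretely, set $\tF_0=\{v\in\tF\mid f(v)=0\}$ and summarize $x$ by the tuple
$$\tau(x)=\bigl(a,b,d,e,x_c\bigr)=\bigl(|X(x)\cap\tL|,\ |X(x)\cap\tR|,\ |X(x)\cap\tI|,\ |X(x)\cap\tF_0|,\ x_c\bigr).$$
The remaining bits $\{x_v\mid v\in\tF\setminus\tF_0\}$ play no role in $C_\cM(x)$: for such a follower $v$, the unique predecessor is $c$ and $f(v)=1$, so the non-oblivious rule assigns $c_v=1$ iff $c\in F(X(x)\cap N)$, independently of $x_v$.

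The analysis of the two relevant spreads is immediate. Because the only $N$-predecessors of $c$ are $\tL\cup\tR$, we have $c\in F(X(x)\cap N)$ iff $a+b\ge f(c)$; when this holds, the whole set $\tR\cup(\tF\setminus\tF_0)\cup\{c\}$ joins the spread, otherwise $F(X(x)\cap N)=X(x)\cap N$. For the full spread used in the non-oblivious rule for players, $c\in F(X(x))$ iff $a+b\ge f(c)$ or $x_c=1$. From these facts, every coordinate of $c^\cM(x)$ becomes a closed-form function of $\tau(x)$: for $i\in\tL\cup\tI$ we get $c_i=x_i$; for $i\in\tR$, $c_i=1$ iff $x_i=1$ or $a+b\ge f(c)$ or $x_c=1$, so the total number of ones among $\tR$ is $|\tR|$ when $c$ is activated from $X(x)$ and $b$ otherwise; for $v\in\tF\setminus\tF_0$, $c_v=1$ iff $a+b\ge f(c)$; for $v\in\tF_0$ the zero label sends $c_v$ into the tie-breaking branch, giving $c_v=x_v$ and a contribution of $e$ ones; and for $c$ itself the pair $(p_c,q_c)$ depends only on $(a,b)$, so the non-oblivious rule together with $x_c$ determines $c_c$ in constant time. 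Summing these contributions produces the total number $N_1(\tau)$ of ones in $c^\cM(x)$, and $C_\cM(x)=1$ iff $N_1(\tau)\ge q$.

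The algorithm then computes $\SatM(u)$ by enumerating the $O(n^5)$ possible tuples $\tau$, evaluating $C_\cM(x)$ from the above formulas, and accumulating for each $\tau$ the number of initial vectors $x$ with $\tau(x)=\tau$ and $x_u=C_\cM(x)$. That number is the product $\binom{|\tL|}{a}\binom{|\tR|}{b}\binom{|\tI|}{d}\binom{|\tF_0|}{e}\,2^{|\tF\setminus\tF_0|}$, modified by a standard Pascal-type adjustment that forces the bit $x_u$: subtract one from both the set size and the chosen count in the binomial coefficient corresponding to the set containing $u$, or halve the factor $2^{|\tF\setminus\tF_0|}$ if $u\in\tF\setminus\tF_0$; the case $u=c$ is handled by restricting the summation to those tuples whose $x_c$-component matches $C_\cM(x)$ and using the unmodified product.

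The main obstacle is the bookkeeping around the mediator $c$: the tie-breaking rule for $c_c$ partitions the $(a,b,x_c)$-space into a handful of sub-cases according to whether $a+b\ge f(c)$, whether $|\tL|-a<f(c)$, and whether $|\tL|+|\tR|-a-b<f(c)$, and a few additional sub-cases appear when some actor (in particular $c$) has label zero, just as in the proof of Lemma~\ref{lem:stars}. Once these finitely many sub-cases and the per-$u$ binomial adjustments are tabulated, the whole computation runs in polynomial time, which establishes the theorem.
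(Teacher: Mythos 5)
Your proof is correct, but it takes a genuinely different route from the paper. The paper leverages its earlier machinery: it observes that for every actor $i\neq c$ the final decision $c_i$ is identical in the oblivious and non-oblivious models (since each such $i$ has in-degree $0$ or $1$ and label in $\{0,1\}$), so the two models can disagree only through the mediator's own vote; it then characterizes the two boundary situations ($|F(X(x))|=q$ with $c$ in the tie-breaking branch, and $|F(X(x))|=q-1$ likewise) and expresses $\SatM_{\cM^n}$ as $\SatM_{\cM^o}$ plus an explicit correction term for $c$, with the oblivious values supplied by Theorem~\ref{teo:sat-star-o} and the extended-star computation of $|F_k|$. You instead compute everything from scratch by observing that $C_\cM(x)$ factors through the aggregate profile $\tau(x)=(|X(x)\cap\tL|,|X(x)\cap\tR|,|X(x)\cap\tI|,\dots,x_c)$, deriving closed forms for each block's contribution to the number of ones in $c^\cM(x)$ (correctly distinguishing that the $\tR$-block sees $F(X(x))$, hence depends on $x_c$, while the followers see $F(X(x)\cap N)$, hence do not), and then summing products of binomial coefficients over the $O(n^4)$ profiles with the standard adjustment fixing $x_u$. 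Your approach is more self-contained and elementary, and avoids relying on the oblivious-case results entirely; the paper's approach is shorter given what precedes it and makes the structural relationship between the two models explicit. Two minor remarks: your bookkeeping for zero labels is slightly loose (e.g., an independent actor or a reciprocal actor with label $0$ would have $c_i=1$ regardless of $x_i$, not $c_i=x_i$), but this is moot because the non-oblivious model is defined only for strictly positive labeling functions; and your assertion that $F(X(x)\cap N)=X(x)\cap N$ when $c$ is not activated would need the same positivity caveat.
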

\begin{proof}
Let $\Gamma= (G,f,q,N)$ be a star influence game.   We analyze the differences among participants in the oblivious $\cM^o$ and non-oblivious $\cM^n$ associated models.
Recall that,   for $x\in\{0,1\}^n$, $p_i(x)=|F(X(x)\cap N)\cap P(i)|$ and $q_i(x)=|P(i)\setminus F(X(x)\cap N)|$ and that the final decision vectors in both models are defined as follows.

In the oblivious model, for $i\in V(G)$,
\[c_i ^{\cM^o}= \begin{cases}
 1 & \mbox{if } i\in F(X(x)),\\
 0 & \mbox{otherwise.}\\
\end{cases}
\]

In the non-oblivious model we have two cases, for $i\in V(G)\setminus N$, 
\[c_i^{\cM^n} = \begin{cases}
 1  & \mbox{if } p_i(x) \geq f(i) \text{ and } q_i(x) < f(i),\\
 0  & \mbox{if } q_i(x) \geq f(i) \text{ and } p_i(x) < f(i),\\
x_i & \mbox{otherwise,}\\
\end{cases}
\]
and, for $ i\in N$,
$$c_i ^{\cM^n}= \begin{cases}
 1 & \mbox{if } i\in F(X(x)),\\
 0 & \mbox{otherwise.}\\
\end{cases}
$$

Let $x\in\{0,1\}^n$ be an initial decision vector.  Note that,  for $i\in V\setminus \{c\}$, $f(i)\in\{0,1\}$ and the in-degree of  $i$ is either 0 or 1. Thus, according to the above expressions,  for $i\in V\setminus \{c\}$, $c^{\cM^o}_i(x)=c^{\cM^n}_i(x)$.  This implies that the final decision vectors in $\cM^o$ and $\cM^n$ can only differ in the final decision of $c$.  

When $p_c(x),q_c(x)\geq f(i)$, $c^{\cM^o}_c(x)=1$  but $c^{\cM^n}_c(x)=x_c$.
When $p_c(x),q_c(x)<f(i)$, $c^{\cM^o}_c(x)=0$  but $c^{\cM^n}_c(x)=x_c$. In all the remaining cases we have   $c^{\cM^o}_c(x)=c^{\cM^n}_c(x)$.

The different final decision of $c$ has relevance only if it implies a  change in the final collective decision, therefore we have to examine only two cases in which a difference can arise.

\noindent
\textbf{Case 1} $p_c(x),q_c(x)\geq f(i)$  and  $|F(X(x))|=q$. 

In this case $c_{\cM^o}(x) =1$, thus in the oblivious model $c$ is satisfied only when $x_c=1$.
But, $c_{\cM^n}(x) =x_c$, thus in the non-oblivious model $c$ is satisfied independently of its initial choice.

\noindent
\textbf{Case 2} $p_c(x),q_c(x)< f(i)$  and  $|F(X(x))|=q-1$.

In this case $c_{\cM^o}(x) =0$, thus in the oblivious model $c$ is satisfied only when $x_c=0$.
But, $c_{\cM^n}(x) =x_c$, thus in the non-oblivious model $c$ is satisfied independently of its initial choice.
 
Using the above properties, we can obtain an expression for   $\SatM_{\cM^n}(i)$.  

When $i\neq C$, the final collective decision in both $\cM^o$ and $\cM^n$ is independent of the value of $x_i$. Therefore,  $\SatM_{\cM^n}(i)=\SatM_{\cM^o}(i)= 2^{n-1}$.

For the central vertex $c$ we have, 
\begin{align*}
\SatM_{\cM^n}(c)=&\SatM_{\cM^o}(c) \\
& + |\{x\in\{0,1\}^n\mid x_c=0, \, p_c(x),q_c(x)\geq f(c) \text{ and } |F(X(x))|=q\}|\\
& + |\{x\in\{0,1\}^n\mid x_c=1, \, p_c(x),q_c(x)< f(c) \text{ and } |F(X(x))|=q-1\}|.
\end{align*}

It is easy to derive closed formulas for the sizes of the sets appearing in the above expressions. Therefore, from Theorem~\ref{teo:sat-star-o} the claim follows. 
\qqed\end{proof}
\section{Conclusions and open problems}

We have introduced collective decision-making models that extend the opinion leader-follower decision models to the setting of influence games. Our \gOLF models extend the original model proposed by van den Brink et al.~\cite{BRS11} to include decision rules different from the simple majority rule. The rules of influence spread are different for \OLF and  for influence games. This leads us to the definition of oblivious and non-oblivious influence models. We have established a connection between players in an influence game and opinion leaders in the associated decision model.

We have studied the computational complexity of the satisfaction measure proposed by van den Brink et al.~\cite{BRS11} for the new collective decision-making models. We have shown that computing the satisfaction measure is \#P-hard even for the more restrictive family of \oOLF models. The problem remains hard when the influence graph is restricted to be a two layered bipartite graph with constant in-degree. Interestingly enough, the measure coincides with the well established \Rae index that is closely related to the Banzhaf value of the associated simple games. Therefore we have found another set of simple games with succinct representation in which computing the Banzhaf value is \#P-hard.

To complement the hardness result we have explored the possibility of having models in which the \SAT problem is tractable by making stronger connections among sets of vertices.
We defined two subfamilies of oblivious influence models, the strong hierarchical influence models and the star influence models in which satisfaction can be computed in polynomial time. The first family considers multilayered bipartite influence graphs having a hierarchy of mediators moderating the spread of influence from opinion leaders to followers. In the  star influence models we also allow a restricted form of mutual influence. The result for  star influence models can be extended, by a tedious proof, following the table computation for hierarchical influence graphs, to a more general version of star influence graphs in which the central vertex is replaced by a set of independent vertices. Any of those vertices is a replica of the central vertex, thus they keep all the existing connections of the central vertex to the remaining vertices.

For both strong hierarchical influence models and the  star influence models, besides computing in polynomial time the satisfaction measure we can compute additional information. In particular we can compute in polynomial time the number of initial decision vectors that expand to $k$ actors being $b$ of them followers. We can consider a {\em mixed oblivious influence model} whose structure is obtained from an influence graph $(G,f)$ and a strong hierarchical or a star influence graph $(G',f')$ adding a complete bipartite graph between $V(G')$ and a set $U\subseteq V(G)$. It is easy to show that the $F_k(\tL(G')\cup \tI(G') \cup \tR(G'))$ in a mixed oblivious influence model can be computed in polynomial time. As the connection is by means of a complete graph only a polynomial number of subsets of $U$ can be activated. Thus strong hierarchical or  star influence graph can be used as mechanism to control the spread of influence in a generic social network.

Among several problems that remain open we want to point out two that relate directly to the hardness proof provided in this paper.  The first one is the complexity of the \SAT problem for \OLF models under the simple majority rule as our reduction does not construct an \oOLF with this restriction. Although we conjecture that the problem is hard we have been unable to straighten the reduction. The second one relates to other aspects of the complexity of counting independent sets of size $\frac{1}{3}n$. In~\cite{Hof10} was shown a stronger result, 
 as the provided reduction from \#3{\sc-SAT}, give a formula with $m$ clauses, constructs a graph
with ${\cal O}(m)$ vertices  in such a way that the number of satisfying assignments
of the original formula coincides with the number of independent sets of size $\frac{1}{3}|V|$ of the constructed graph. Such property implies that the problem of computing the number of independent sets with size exactly $\frac{2}{3}|V|$ in a graph is harder, in the sense that it cannot be solved by a sub-exponential time algorithm, unless the $\#3${\sc-Sat} problem can be solved in sub-exponential time~\cite{Hof10}.  The graph constructed in our reduction does not have linear size with respect to the input graph.   Hence, we can only deduce that the problem is \#P-hard. It remains open to show if sub-exponential time algorithms can be also ruled out. 

\remove{One restrictive decision model that has received some attention is the one in which the unanimity of the predecessors is required to adopt a decision. This is equivalent to set $r=1$ in the \OLF models or set $f(u)=\delta^-(u)$ in influence models.  Van den Brink~et al.   provided an axiomatization for  the satisfaction measure  for \OLF{s} with $r=1$ \cite{BRS12}.  The influence games on undirected influence graphs with  $f(u)=\delta^-(u)$ received the name of \emph{maximum influence} games \cite{MRS15}. Interestingly enough several problems that are hard for influence games became tractable for maximum influence games. None of the reductions  in this paper  produces a model in which unanimity of predecessors  is required. Therefore, the complexity of computing the satisfaction measure in  such subfamilies of decision models remains as an interesting open problem.  }

Another line for further work is to study the computational complexity of other measures for collective decision-making models. 
In particular  the {\em power} measure introduced in~\cite{BRS11}. The power of an actor  is the number of society's initial decisions for which that actor can change the collective decision by changing its initial  decision. 
As the satisfaction measure, the power measure  can be expressed in a form similar to the  Rae index, but considering the sets of minimal winning and maximal losing coalitions, instead of $\cW$ and $\cL$. 
The satisfaction measure is closely related to the {\em Chow parameters} for simple games~\cite{DS79}, but considering also losing coalitions. Chow parameters were initially defined in the 1960s in the context of threshold Boolean functions~\cite{Cho61}, and it is known that their computation, for simple games given by their minimal winning coalitions, is \#P-complete~\cite{Azi08}. In a similar way,  the power measure is related with another power index called the {\em Holler index}~\cite{Azi09}. It will be of interest to see if the results on this paper can be extended to the computation of those and other measures.

Finally, let us mention that due to the relationship that we have established between these models and simple games, all the power indices introduced in the context of simple games can also be transformed in measures for influence models. It is of interest to analyze which of those power indices provide interesting measures for decision systems.


\section*{Acknowledgments}

Xavier Molinero has been partially supported by funds from the Spanish Ministry of Economy and Competitiveness (MINECO) under grant MTM2015-66818-P.  Fabi\'an Riquelme was partially supported by project PMI USA1204. Maria Serna has been partially supported by funds from the Spanish Ministry of Economy and Competitiveness (MINECO) and the European Union (FEDER funds) under grant COMMAS (ref. TIN2013-46181-C2-1-R), and SGR 2014--1034  (ALBCOM) of the Catalan government.


\end{document}